\definecolor{webgreen}{rgb}{0,.5,0}
\definecolor{webbrown}{rgb}{.6,0,0}
\newcommand{\seqnum}[1]{\href{https://oeis.org/#1}{\rm \underline{#1}}}
\DeclareMathOperator{\per}{per}
\DeclareMathOperator{\ice}{ice}
\DeclareMathOperator{\nnp}{nnp}
\DeclareMathOperator{\PER}{PER}
\DeclareMathOperator{\pdp}{pdp}
\DeclareMathOperator{\pdlp}{pdlp}
\def\suchthat{ \, : \, }
\begin{document}

\theoremstyle{plain}
\newtheorem{theorem}{Theorem}
\newtheorem{corollary}[theorem]{Corollary}
\newtheorem{lemma}[theorem]{Lemma}
\newtheorem{proposition}[theorem]{Proposition}

\theoremstyle{definition}
\newtheorem{definition}[theorem]{Definition}
\newtheorem{example}[theorem]{Example}
\newtheorem{conjecture}[theorem]{Conjecture}

\theoremstyle{remark}
\newtheorem{remark}[theorem]{Remark}

\title{An inequality for the number of periods in a word}

\author{Daniel Gabric\footnote{School of Computer Science, University of Waterloo, Waterloo, ON  N2L 3G1, Canada; {\tt dgabric@uwaterloo.ca}; {\tt shallit@uwaterloo.ca}.},\quad Narad Rampersad\footnote{
Department of Math/Stats,
University of Winnipeg,
515 Portage Ave.,
Winnipeg, MB, R3B 2E9
Canada; {\tt narad.rampersad@gmail.com}.}
,\quad and Jeffrey Shallit$^*$}

\maketitle

\begin{abstract}
We prove an inequality for the number of periods in a word $x$
in terms of the length of $x$ and its initial critical exponent.  Next, we characterize all periods of the length-$n$ prefix of a characteristic Sturmian word in terms of the lazy Ostrowski representation of $n$, and use this result to show that our
inequality is tight for infinitely many words
$x$.  We propose two related measures of periodicity for infinite words.   Finally,  we also consider special cases where $x$ is overlap-free or
squarefree.    
\end{abstract}

\section{Introduction}

Let $x$ be a finite nonempty word of length $n$.  We say that
an integer $p$, $1 \leq p \leq n$, is a {\it period\/} of
$x$ if $x[i] = x[i+p]$ for $1 \leq i \leq n-p$.   For
example, the English word {\tt alfalfa} has periods
$3,6,$ and $7$.  A
period $p$ is {\it nontrivial\/} if $p < n$; the period $n$ is
{\it trivial\/} and is often ignored.
The least period of a word is sometimes
called {\it the\/} period and is written
$\per(x)$.  The number of nontrivial periods of a word
$x$ is written $\nnp(x)$.   Sometimes the prefix
$x[1..p]$ is also called a period; in general, this should cause no confusion.

The {\it exponent\/} of a length-$n$ word $x$ is defined
to be $\exp(x) = n/\per(x)$.   For example, the
French word {\tt entente} has exponent $7/3$.   The
{\it initial critical exponent\/} $\ice(x)$ of a finite or infinite word $x$ is defined to be 
$$ \ice(x) := \sup_{{p \text{ a nonempty}} \atop {\text{ prefix of $x$}}} \exp(p).$$
For example, $\ice({\tt phosphorus}) = 7/4$.  
This concept was (essentially) introduced by
Berth\'e, Holton, and Zamboni \cite{Berthe&Holton&Zamboni:2006}.

A word $w$ is a {\it border\/} of $x$ if $w$ is both
a prefix and a suffix of $x$.   Although overlapping borders
are allowed, by convention we generally rule out
borders $w$ where $|w| \in \{ 0, |x| \}$.

There is an obvious relationship between borders and periods:
a length-$n$ word $x$ has a nontrivial period $t$ iff it has a border
of length $n-t$.
For example, the English word {\tt abracadabra}
has periods $7, 10$, and $11$, and borders of length $1$ and
$4$.

A word is {\it unbordered} if it has no borders and {\it bordered\/}
otherwise.   An unbordered 
word $x$ has only the trivial period $|x|$.   On the other
hand, a word of the form $a^n$, for $a$ a single letter,
evidently has the largest possible number of periods; namely,
$n$.

In this note we prove an inequality that gives an upper bound
for $\nnp(x)$,
the number of nontrivial periods of (and hence, the number of
borders in) a word $x$.  Roughly speaking, this inequality says
that, in order for a word to have many periods, it must either
be very long, or have a large initial critical exponent.   
We also prove that our
inequality is tight, up to an additive constant.
To do so, in Section~\ref{three} we characterize all periods of the length-$n$ prefix of a characteristic Sturmian word in terms of the lazy Ostrowski representation of $n$.  In Section~\ref{twomeas}, we propose two related measures of periodicity for infinite words, and we compute these measure for some famous words.   Finally, in the last two sections, we consider the shortest binary overlap-free (resp., ternary squarefree) words having $n$ periods.

\section{The period inequality}

\begin{theorem}
Let $x$ be a bordered word of length $n \geq 1$.
Let $e = \ice(x)$.    Then
\begin{equation}
\nnp(x) \leq {e \over 2 } + 1 + {{\ln (n/2)} \over {\ln (e/(e-1))}} .
\label{bound1}
\end{equation}
\end{theorem}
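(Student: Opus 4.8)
The plan is to exploit the correspondence noted above between the $\nnp(x)$ nontrivial periods of $x$ and its $\nnp(x)$ proper borders, and to split the periods into two regimes according to whether they are at most $n/2$ or larger. Write $p_1 = \per(x)$ for the least period. The one observation used throughout is that $x$ is itself a prefix of $x$, so $\exp(x) = n/p_1 \le e$, giving the basic lower bound $p_1 \ge n/e$.

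For the ``short'' periods (those $p \le n/2$) I would argue via the classical Fine--Wilf theorem. If $p,q \le n/2$ are both periods of $x$, then $p + q - \gcd(p,q) \le n$, so $\gcd(p,q)$ is again a period of $x$; hence the set $S$ of periods that are $\le n/2$ is closed under taking gcds. Consequently $d = \min S$ divides every element of $S$, and since any period exceeding $n/2$ is already larger than $d$, in fact $d = p_1$. Thus every short period is a multiple of $p_1$, so their number is at most $\lfloor (n/2)/p_1 \rfloor \le (n/2)/(n/e) = e/2$. This produces the $e/2$ summand in the claimed bound.

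For the ``long'' periods (those $p > n/2$, equivalently the borders of length $< n/2$) the key is a geometric-decay estimate for border lengths. List the proper borders of $x$ by decreasing length as $v_1 \supset v_2 \supset \cdots \supset v_k$, where each $v_{i+1}$ is the longest proper border of $v_i$, and set $b_i := |v_i|$. Since every border is a prefix of $x$, each $v_i$ is a prefix, so $\exp(v_i) \le e$; as $\per(v_i) = b_i - b_{i+1}$, this rearranges to $b_{i+1} \le \tfrac{e-1}{e}\, b_i$. Iterating this inequality from the longest border of length $< n/2$ down to the shortest border (of length $\ge 1$) shows that there are at most $1 + \tfrac{\ln(n/2)}{\ln(e/(e-1))}$ such borders, which supplies the remaining part of the bound. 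Adding the two counts gives the theorem.

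I expect the main obstacle to be recognizing the two-regime split together with the two distinct mechanisms it requires: the critical-exponent hypothesis must be fed into the border chain (to force geometric shrinking of the border lengths once they drop below $n/2$), whereas the short periods instead demand the gcd-closure coming from Fine--Wilf. Verifying that $\min S = \per(x)$, and handling the boundary case $p = n/2$ as well as the base case where some $v_i$ is unbordered (so that the recursion terminates), are the places where a little care is needed; the rest is routine arithmetic with logarithms.
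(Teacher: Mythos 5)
Your proof is correct and follows essentially the same route as the paper: the same split of the periods at $n/2$, the same Fine--Wilf argument showing that the short periods are exactly the multiples of $\per(x)$ (hence at most $e/2$ of them), and the same geometric-decay estimate $b_i/b_{i+1} \geq e/(e-1)$ on consecutive short border lengths. The only divergence is that you obtain that decay from the border chain together with the identity $\per(v_i) = b_i - b_{i+1}$, whereas the paper invokes the Lyndon--Sch\"utzenberger theorem to exhibit the period $b_i - b_{i+1}$ of the longer border; these are the same underlying fact, so no further comparison is needed.
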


\begin{proof}
We break the bound up into two pieces, by considering
the periods of size $\leq n/2$ and $> n/2$.  We call
these the {\it short\/} and {\it long\/} periods.

Let $p = \per(x)$, the shortest period of $x$.
If $p$ is short, then
$x$ has short periods $p, 2p, 3p, \ldots, \lfloor n/(2p) \rfloor p$.
Clearly $\ice(x) \geq n/p$, so we get at most
$e/2$ periods from this list.   To see that there are 
no other short periods, let $q$ be some short period
not on this list.  Then $p < q \leq n/2$ by assumption.
By the Fine-Wilf theorem \cite{Fine&Wilf:1965},
if a word of length $n$
has two periods $p, q$ with $n \geq p + q - \gcd(p,q)$, then
it also has period $\gcd(p,q)$.   Since $\gcd(p,q) \leq p$,
either $\gcd(p,q) < p$, which is a contradiction, or
$\gcd(p,q) = p$, which means $q$ is a multiple of $p$,
another contradiction.

Next, let's consider the long periods or, alternatively,
the short borders (those of length $< n/2$).
Suppose $x$ has borders $y, z$
of length $q$ and $r$ respectively, with $q < r < n/2$.
Then $x = y y' y = z z' z$ for words $y'$ and $z'$.
Hence $z = yt = t'y$ for some nonempty words $t$
and $t'$.   Then by the Lyndon-Sch\"utzenberger
theorem (see, e.g., \cite{Lyndon&Schutzenberger:1962})
we know there exist words $u, v$ with $u$ nonempty,
and an integer $d \geq 0$,
such that $t' = uv$, $t = vu$, and $y = (uv)^d u$.
Hence $x$ has the prefix $z = yt = (uv)^{d+1} u$,
which means $e = \ice(x) \geq |z|/|uv| = r/(r-q)$.

Now the inequality $r/(r-q) \leq e$ is equivalent to
$r/q \geq e/(e-1)$.  Thus if $b_1 <  b_2 < \cdots < b_t$
are the lengths of all
the short borders of $x$, by the previous
paragraph we have
$$b_1 \geq 1,\ b_2 \geq (e/(e-1))b_1 \geq e/(e-1),$$
and so forth, and hence $b_t \geq (e/(e-1))^{t-1}$.  
All these borders are of length at most $n/2$,
so $n/2 > b_t \geq (e/(e-1))^{t-1}$.  
Hence
$$t \leq 1 + {{\ln(n/2)} \over {\ln(e/(e-1))}} ,$$
and the result follows.
\end{proof}

It is also possible to simplify the statement of
the bound \eqref{bound1}, at the cost of being less precise.

\begin{corollary}
Let $x$ be a word of length $n \geq 1$, and let $e = \ice(x)$.
Then
\begin{enumerate}[(a)]
\item $\nnp(x) \leq {e \over 2 } + 1 + (e - {1 \over 2}) \ln (n/2) $;
\item $\nnp(x) \leq C e \ln n$, 
where $C = 3/(2 \ln 2) \doteq 2.164$.
\end{enumerate}
\label{bound2}
\end{corollary}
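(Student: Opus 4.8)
The plan is to derive both simplifications directly from the period inequality \eqref{bound1}. Throughout I would first dispose of the trivial case: if $x$ is unbordered then $\nnp(x)=0$, while the right-hand sides of (a) and (b) are both nonnegative (for (a) this uses $\ln(n/2)\ge 0$ when $n\ge 2$, and the case $n=1$ forces $e=1$, giving a positive value; for (b) it uses $e\ge 1$ and $\ln n\ge 0$). Hence I may assume $x$ is bordered, so that \eqref{bound1} applies and, moreover, $n\ge 2$ and $e>1$.

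For part (a), the only quantity in \eqref{bound1} that needs changing is the factor $1/\ln(e/(e-1))$ multiplying $\ln(n/2)$. Since $\ln(n/2)\ge 0$ for $n\ge 2$, it suffices to prove
$$\frac{1}{\ln(e/(e-1))} \le e - \frac12,$$
which, on taking reciprocals and writing $e/(e-1)=1+\tfrac{1}{e-1}$, is equivalent to $\ln\!\bigl(1+\tfrac{1}{e-1}\bigr)\ge \tfrac{2}{2e-1}$. I would obtain this from the elementary Pad\'e-type lower bound $\ln(1+u)\ge \frac{2u}{2+u}$, valid for all $u\ge 0$, applied with $u=1/(e-1)>0$; a one-line computation turns $\frac{2u}{2+u}$ into $\frac{2}{2e-1}$. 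The auxiliary inequality itself follows because both sides vanish at $u=0$ and the derivative comparison reduces to $(2+u)^2\ge 4(1+u)$, i.e.\ $u^2\ge 0$. Substituting this estimate into \eqref{bound1} yields (a).

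For part (b) the key observation is that (a) is \emph{already tight} at $(e,n)=(1,2)$, where both sides equal $3/2$; consequently a crude term-by-term comparison of the three summands in (a) against $Ce\ln n$ overshoots and does not recover the stated constant $C=3/(2\ln 2)$. Instead I would fix $e$ and study the discrepancy
$$g(n) := \frac{3}{2\ln 2}\,e\ln n - \Bigl(e-\tfrac12\Bigr)\ln(n/2) - \frac e2 - 1$$
as a function of $n\ge 2$. Differentiating gives $g'(n)=\frac1n\bigl[\frac{3e}{2\ln 2}-e+\frac12\bigr]$, which is positive since $\frac{3}{2\ln 2}>1$ (because $2\ln 2=\ln 4<3$) and $e\ge 1$; hence $g$ is increasing and it suffices to check $g(2)$. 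At $n=2$ the term $\ln(n/2)$ vanishes and $\frac{3}{2\ln 2}e\ln 2=\frac{3e}{2}$, so $g(2)=\frac{3e}{2}-\frac e2-1=e-1\ge 0$, proving (b).

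The main obstacle is not any single computation but recognizing the sharpness at the boundary $(e,n)=(1,2)$ in part (b): this is what forces the monotonicity-in-$n$ argument followed by evaluation at $n=2$, rather than a naive bound on each term, and it is exactly what pins down the constant $C=3/(2\ln 2)$.
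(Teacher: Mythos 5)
Your proof is correct and follows essentially the same route as the paper: part (a) reduces to the bound $1/\ln(e/(e-1)) \le e - \tfrac12$ (which you establish via the Pad\'e estimate $\ln(1+u)\ge 2u/(2+u)$ where the paper merely invokes the Taylor series), and part (b) compares the right-hand side of (a) to $Ce\ln n$ anchored at the tight point $(e,n)=(1,2)$, which you do by monotonicity in $n$ where the paper uses an equivalent algebraic identity. Your explicit separate treatment of unbordered words is in fact slightly more careful than the paper's, since \eqref{bound1} is stated only for bordered $x$ and an unbordered word can still have $\ice(x)>1$.
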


\begin{proof}
\begin{enumerate}[(a)]

\item
Start with \eqref{bound1}.  If $e > 1$,
then by computing the Taylor series for
${1 \over {\ln(e/(e-1))}}$,
we see that
$${1 \over {\ln(e/(e-1))}} \leq e - {1 \over 2} .$$

If $e = 1$, then $x$ is unbordered.
The left-hand side of (a) is then $0$, while the right-hand side
is at least $3/2 + (1/2) \ln n/2 \geq 1$.

\item
If $n = 1$ then the desired inequality follows trivially.

Otherwise assume $n \geq 2$.   It is easy to check that
$$ 1 + {1\over2} \ln 2 = (\ln 2 - {1 \over 2}) + {1\over 2}\ln 2 + (C-1)\ln 2$$
where $C = 3/(2 \ln 2)$.  Thus
$$ 1 + {1 \over 2} \ln 2 \leq (\ln 2 - {1 \over 2})e + {1 \over 2}\ln n
	+ (C-1) e \ln n,$$
since $n \geq 2$ and $e \geq 1$.
Now add $e \ln n$ to both sides and rearrange to get
$$ {e \over 2} + 1 + (e - {1 \over 2}) \ln (n/2) \leq C e \ln n,$$
which by (a) gives the desired result.
\end{enumerate}
\end{proof}

It is natural to wonder how tight the bound \eqref{bound1} is for a ``typical'' word of length $n$.  The following two results imply that the expected value of the left-hand side of \eqref{bound1} is $O(1)$, while the expected value of the right-hand side is $\Theta(\ln n)$.  Our inequality, therefore, implies
nothing useful about the ``typical'' word.

\begin{theorem}
Let $k \geq 2$.
Over a $k$-letter alphabet, the
expected number of borders (or the number of nontrival periods)
of a length-$n$ word is $k^{-1} + k^{-2} + \cdots + k^{1-n} 
\leq {1 \over {k-1}}$.
\end{theorem}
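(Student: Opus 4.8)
The plan is to use linearity of expectation, which reduces the problem to computing, for each relevant length, the probability that a uniformly random length-$n$ word over a $k$-letter alphabet possesses a border of that length. By the border--period correspondence recalled in the introduction, a nontrivial border of length $j$ (with $1 \leq j \leq n-1$) corresponds exactly to the nontrivial period $t = n-j$, so it is equivalent — and slightly cleaner — to count, for each $t$ with $1 \leq t \leq n-1$, the words having $t$ as a period.

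The key step is the claim that exactly $k^t$ of the $k^n$ words of length $n$ have $t$ as a period. To justify this, recall that $t$ being a period means $x[i] = x[i+t]$ for $1 \leq i \leq n-t$, equivalently $x[i] = x[i-t]$ for $t < i \leq n$. Thus every letter in a position exceeding $t$ is forced by an earlier letter, so the word is completely determined by its prefix $x[1..t]$; conversely, each of the $k^t$ choices for this prefix extends in exactly one way to a word with period $t$. Hence the probability that a random word has period $t$ is $k^t / k^n = k^{t-n}$. I expect this counting to be the only substantive point — one must check that each prefix choice yields a genuine period and that distinct prefixes give distinct words, so that there is no over- or undercounting — but the verification is elementary and requires no case analysis on whether the corresponding border overlaps itself.

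Finally, since linearity of expectation requires no independence among the (correlated) events ``$t$ is a period,'' the expected number of nontrivial periods is
$$ \sum_{t=1}^{n-1} k^{t-n} = \sum_{j=1}^{n-1} k^{-j} = k^{-1} + k^{-2} + \cdots + k^{1-n}, $$
after reindexing by $j = n-t$. I would then bound this finite geometric sum by its infinite counterpart $\sum_{j \geq 1} k^{-j} = 1/(k-1)$ to obtain the stated inequality.
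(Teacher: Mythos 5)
Your proof is correct and takes essentially the same route as the paper: linearity of expectation together with the count of $k^{n-i}$ words of length $n$ having a border of length $i$ (equivalently, $k^{t}$ words with period $t = n-i$), giving $E[B_i]=k^{-i}$ and the stated geometric sum. If anything, your period-based count---where the first $t$ letters are free and determine the rest---is slightly cleaner than the paper's phrasing about fixing the right border once the left border is chosen, since it sidesteps any concern about the two borders overlapping.
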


\begin{proof}
By the linearity of expectation, the expected number of borders is the sum, from $i = 1$ to $n-1$, of the expected value of the indicator random variable $B_i$ taking the value 1 if there is a border of length $i$,
and $0$ otherwise.
Once the left border of length $i$ is chosen arbitrarily,
the $i$ bits of the
right border are fixed,
and so there are $n-i$ free choices of symbols.  This means that
$E[B_i] = k^{n-i}/k^n = k^{-i}$.
\end{proof}

\begin{theorem}
The expected value of
$\ice(x)$, for finite or infinite words $x$, is $\Theta(1)$.
\label{expbord}
\end{theorem}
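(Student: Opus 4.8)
The plan is to prove separately that $E[\ice(x)] = \Omega(1)$ and $E[\ice(x)] = O(1)$. The lower bound is immediate: every nonempty word has a length-$1$ prefix, whose exponent is $1$, so $\ice(x) \ge 1$ deterministically and hence $E[\ice(x)] \ge 1$. All the real work is therefore in the upper bound. Since $\ice(x) \ge 1$, the layer-cake (tail-sum) formula reduces the task to bounding a tail integral:
$$ E[\ice(x)] = 1 + \int_1^\infty P(\ice(x) > t)\,dt, $$
so it suffices to show that this integral is a constant.

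To estimate $P(\ice(x) > t)$ I would use a union bound over prefixes and periods. The event $\ice(x) > t$ is the union, over all lengths $m$, of the event that the length-$m$ prefix has exponent exceeding $t$; the latter is in turn the union, over periods $q < m/t$, of the event that $q$ is a period of the length-$m$ prefix. The same counting argument used in the expected-border theorem shows that a uniformly random length-$m$ word has a fixed period $q$ with probability exactly $k^{-(m-q)}$. Writing $m = q + j$ with $j \ge 1$, the condition $m/q > t$ becomes $j > (t-1)q$, so countable sub-additivity gives
$$ P(\ice(x) > t) \le \sum_{q \ge 1} \sum_{j > (t-1)q} k^{-j} \le \frac{k}{k-1}\sum_{q\ge 1} k^{-(t-1)q} = \frac{k}{k-1}\cdot \frac{k^{-(t-1)}}{1 - k^{-(t-1)}}. $$
For a finite word the inner sums are merely truncated, which only decreases the bound, so the same estimate holds uniformly in the length $n$.

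The final step is to integrate, and here the main subtlety appears: the union bound above diverges as $t \to 1^+$, so it cannot be used on all of $[1,\infty)$. I would therefore split the integral at $t = 2$. On $(1,2]$ I use the trivial bound $P(\ice(x) > t) \le 1$, contributing at most $1$. On $(2,\infty)$ the factor $1 - k^{-(t-1)}$ is bounded below by $1 - k^{-1} \ge 1/2$, so the estimate simplifies to $P(\ice(x) > t) \le \tfrac{2k}{k-1}\,k^{-(t-1)}$, whose integral $\tfrac{2k}{k-1}\int_2^\infty k^{-(t-1)}\,dt = \tfrac{2}{(k-1)\ln k}$ is a finite constant. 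Adding the pieces yields $E[\ice(x)] \le 2 + \tfrac{2}{(k-1)\ln k} = O(1)$, and combined with the lower bound this gives $E[\ice(x)] = \Theta(1)$.

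I expect the only genuine obstacle to be the behaviour of the tail bound near $t = 1$, which forces the split into a trivially bounded region and a geometrically decaying region; everything else is routine union-bound and geometric-series bookkeeping. One should also note, for precision, that the constants depend on the alphabet size $k$, but since the statement asserts $\Theta(1)$ for a fixed alphabet this dependence is harmless.
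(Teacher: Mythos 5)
Your proof is correct and follows essentially the same route as the paper's: both rest on the union-bound estimate that a fixed period $q$ occurs in a length-$m$ prefix with probability $k^{-(m-q)}$, followed by a tail-sum computation of the expectation. The only difference is cosmetic: the paper decomposes over integer power levels $j$ (the quantities $H_j$, combined by Abel summation), whereas you integrate the tail $P(\ice(x)>t)$ over real $t$ via the layer-cake formula, handling the divergence at $t=1$ by splitting the integral exactly where the paper isolates $H_1=1$.
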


\begin{proof}
Let's count the  fraction $H_j$ of words having at least a $j$'th
power prefix.
Count the number of words having a $j$'th power prefix with period 1, 2, 3, etc.
This double counts, but shows that
$H_j \leq k^{1-j} + k^{2(1-j)} + \cdots = 1/(k^{j-1} - 1)$
for $j \geq 2$.
Clearly $H_1 = 1$.

Then $H_{j-1} - H_j$ is the fraction of words having a $(j-1)$th power prefix
but no $j$th power prefix.  These words will have an ice at most $j$.
So the expected value of ice is bounded above by
\begin{align*}
2(H_1 - H_2) + 3(H_2 - H_3) + 4(H_3 - H_4) + \cdots 
&= 2 H_1 + H_2 + H_3 + H_4 + \cdots \\
&= 2 + H_2 + H_3 + H_4 + \cdots \\
&= 2 + \sum_{j \geq 2} 1/(k^{j-1} - 1) \\
&= 2 + \sum_{j \geq 1} 1/(k^j - 1).
\end{align*}
\end{proof}

\section{Periods of prefixes of characteristic Sturmian words}
\label{three}

In this section we take a brief digression to completely characterize the periods of the length-$n$ prefix
of the characteristic Sturmian word with slope $\alpha$.
This characterization is based on a remarkable
connection between these periods and the
so-called
``lazy Ostrowski'' representation of $n$.   
Theorem~\ref{ostt} below implies that all the periods of a length-$n$ prefix of a Sturmian
characteristic word can be read off directly from the lazy Ostrowski representation
of $n$.  

We start by recalling the Ostrowski numeration system.
Let $0 < \alpha < 1$ be an irrational real number with
continued fraction expansion $[0, a_1, a_2, \ldots ]$.
Define $p_i/q_i$ to be the $i$'th convergent to this
continued fraction, so that
$[0, a_1, a_2, \ldots, a_i] = p_i/q_i$.  In the
(ordinary) Ostrowski numeration system, we write every positive
integer in the form
\begin{equation}
n = \sum_{0 \leq i \leq t} d_i q_i,
\label{ost}
\end{equation}
where $d_t > 0$ and the $d_i$ have to obey three conditions:
\begin{enumerate}[(a)]
\item $0 \leq d_0 < a_1$;
\item $0 \leq d_i \leq a_{i+1}$ for $i \geq 1$;
\item For $i \geq 1$, if $d_i = a_{i+1}$ then $d_{i-1} = 0$.
\end{enumerate}
See, for example, \cite[\S 3.9]{Allouche&Shallit:2003}.

The {\it lazy Ostrowski representation} is again
defined through the sum \eqref{ost}, but with 
slightly different conditions:
\begin{enumerate}[(a)]
\setcounter{enumi}{3}
\item $0 \leq d_0 < a_1$;
\item $0 \leq d_i \leq a_{i+1}$ for $i \geq 1$;
\item For $i \geq 2$, if $d_i = 0$, then $d_{i-1} = a_i$;
\item If $d_1 = 0$, then $d_0 = a_i - 1$.
\end{enumerate}
See, for example,
\cite[\S 5]{Epifanio&Frougny&Gabriele&Mignosi&Shallit:2012}.
By convention, the Ostrowski representation is written as a finite word $d_t d_{t-1} \cdots d_1 d_0$,
starting with the most significant digit.

Next, we recall the definition of the characteristic Sturmian
infinite word ${\bf x}_\alpha = x_1 x_2 x_3 \cdots$.  It is defined by
$$ x_i = \lfloor (i+1) \alpha \rfloor - \lfloor i \alpha \rfloor$$
for $i \geq 1$.    For more about Sturmian words, see
\cite{Berstel&Seebold:2002,Reutenauer:2019,Berstel&Lauve&Reutenauer&Saliola:2009}.

\begin{example}
Take $\alpha = \sqrt{2} - 1 = [0,2,2,2,\ldots]$.   
Then $q_0 = 1$, $q_1 = 2$, $q_2 = 5$, $q_3 = 12$.   The first few
ordinary and lazy Ostrowski representations are given in the table below.
\begin{center}
\begin{tabular}{c|c|c||c|c|c}
$n$ & ordinary & lazy & $n$ & ordinary & lazy  \\
 & Ostrowski & Ostrowski  & & Ostrowski & Ostrowski \\
 \hline
 1 & 1 & 1  & 15 & 1011 & 221 \\
 2 & 10 & 10 & 16 & 1020 & 1020 \\
 3 & 11 & 11 & 17 & 1100 & 1021 \\
 4 & 20 & 20 & 18 & 1101 & 1101 \\
 5 & 100 & 21 & 19 & 1110 & 1110 \\
 6 & 101 & 101 & 20 & 1111 & 1111 \\
 7 & 110 & 110 & 21 & 1120 & 1120 \\
 8 & 111 & 111 & 22 & 1200 & 1121 \\
 9 & 120 & 120 & 23 & 1201 & 1201 \\
 10 & 200 & 121 & 24 & 2000 & 1210 \\
 11 & 201 & 201 & 25 & 2001 & 1211 \\
 12 & 1000 & 210 & 26 & 2010 & 1220 \\
 13 & 1001 & 211 & 27 & 2011 & 1221 \\
 14 & 1010 & 220 & 28 & 2020 & 2020 
\end{tabular}
\end{center}
\end{example}

In what follows, fix a suitable $\alpha$.
Let $Y_n$ for $n \geq 1$ be the prefix of
${\bf x}_\alpha$ of length $n$, and define
$X_n := Y_{q_n}$.  Let
$\PER(n)$ denote the set of all
periods of $Y_n$ (including the trivial period $n$).  Then we have the following result, which gives a complete characterization of the periods of $Y_n$.  It can be viewed as a generalization of a 2009 theorem of Currie and Saari \cite[Corollary 8]{Currie&Saari:2009}, which obtained the least period of $X_n$.
\begin{theorem}\label{ostt}
\leavevmode
\begin{enumerate}[(a)]
\item The number of periods of $Y_n$ (including the trivial period $n$)
is equal to the sum of the digits in the lazy Ostrowski 
representation of $n$.

\item Suppose the lazy Ostrowski representation of $n$
is $\sum_{0 \leq i \leq t} d_i q_i$.   Define
$$A(n) = \left\lbrace e q_j + \sum_{j < i \leq t} d_i q_i : 1 \leq e \leq d_j \text{ and }
0 \leq j \leq t \right\rbrace.$$  Then $\PER(n) = A(n)$.
\end{enumerate}
\end{theorem}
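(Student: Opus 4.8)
The plan is to prove part~(b) in full and deduce part~(a) for free. The elements of $A(n)$ split into disjoint blocks indexed by $j$: the block for a given $j$ consists of the $d_j$ values $eq_j+\sum_{j<i\le t}d_iq_i$ with $1\le e\le d_j$. These blocks are strictly increasing and disjoint, since the largest element of the block for $j$ is $\sum_{i\ge j}d_iq_i$ while the smallest for $j-1$ is $q_{j-1}+\sum_{i\ge j}d_iq_i$. Hence $|A(n)|=\sum_{0\le i\le t}d_i$, and once $\PER(n)=A(n)$ is known, part~(a) is exactly this digit sum. Throughout I would use the standard facts about the words $X_k$: the recurrence $X_k=X_{k-1}^{a_k}X_{k-2}$, that $X_{k-2}$ is a prefix of $X_{k-1}$ (so each $X_k$ is primitive and begins with $X_{k-1}^{a_k}$), and the near-commutation property that $X_kX_{k-1}$ and $X_{k-1}X_k$ agree except in their last two letters. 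From these I would record one clean consequence: ${\bf x}_\alpha$ begins with $X_t^{a_{t+1}+1}$. Indeed it begins with $X_{t+1}=X_t^{a_{t+1}}X_{t-1}$, and the continuation rebuilds one more copy of $X_t$ because $X_{t-1}$ followed by the length-$(q_t-q_{t-1})$ prefix of $X_t$, namely $X_{t-1}^{a_t-1}X_{t-2}$, equals $X_t$.

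Now fix $n$ with lazy representation of top index $t$ and top digit $d_t$, and set $n'=n-d_tq_t$. The first place where laziness is essential is that conditions~(f)--(g) force $1\le n'<q_t$; in particular $n'\neq 0$, which is exactly what later prevents a degenerate collapse. Since $n'<q_t$ we get $n<(d_t+1)q_t\le(a_{t+1}+1)q_t$, so $Y_n$ is a prefix of $X_t^{a_{t+1}+1}$ and therefore has period $q_t$, hence each period $eq_t$ with $1\le e\le d_t$; and as $n<(d_t+1)q_t$ these are the only multiples of $q_t$ that are $\le n$. For the long periods I would prove the reduction: if $p>d_tq_t$ then $p\in\PER(n)$ iff $p-d_tq_t\in\PER(n')$. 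This is short, since both conditions concern the same range $i\le n-p$, and because the period $d_tq_t$ holds throughout $Y_n$ one has ${\bf x}_\alpha[i+p]={\bf x}_\alpha[i+p-d_tq_t]$ for every such $i$, so the two border conditions coincide. Together with the identity $A(n)=\{q_t,\dots,d_tq_t\}\sqcup(d_tq_t+A(n'))$, a strong induction on $n$ then reduces part~(b) to showing that the only periods $\le d_tq_t$ are the multiples $q_t,2q_t,\dots,d_tq_t$.

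To rule out other periods $\le d_tq_t$, suppose $p\le d_tq_t$ is a period with $q_t\nmid p$, and write $p=sq_t+r$ with $0<r<q_t$. Using the period $q_t$ of $Y_n$ to slide the shift $p$ down by $s$ copies of $q_t$, one obtains ${\bf x}_\alpha[i]={\bf x}_\alpha[i+r]$ for all $i\le n-p$, i.e.\ $r$ is a period of the prefix $Y_\ell$ with $\ell=n-sq_t=(d_t-s)q_t+n'$. If $s\ge 1$ then $\ell<n$ and, because $d_t-s\ge 1$ and $n'\ge 1$, the number $\ell$ again has lazy top index $t$; so by the induction hypothesis its least period is $q_t>r$, a contradiction. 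The remaining case $s=0$ is precisely the assertion that $Y_n$ itself has no period smaller than $q_t$.

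The main obstacle is therefore the \emph{least-period lemma}: if $n$ has lazy top index $t$ then $\per(Y_n)=q_t$. The upper bound is the period fact above. For the lower bound, a period $p<q_t$ of $Y_n$ is in particular a period of the prefix $X_t=Y_{q_t}$, so by the induction hypothesis applied to $q_t$ (whose top index is $<t$) we have $p\in A(q_t)$, forcing $q_{t-1}\le p<q_t$; equivalently $X_t$ has a border of length $q_t-p\le q_t-q_{t-1}$. It then remains to show such a $p$ cannot extend to a period of $Y_n$ when $n>q_t$. Whenever $n\ge p+q_t-\gcd(p,q_t)$ this is immediate from the Fine--Wilf theorem, as the resulting period $\gcd(p,q_t)\mid q_t$ would make the primitive word $X_t$ a proper power. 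The delicate part is the boundary regime $n<p+q_t-\gcd(p,q_t)$ (which forces $d_t=1$ and $n=q_t+n'$ with $n'$ small), where Fine--Wilf does not apply; here I would argue directly from the explicit structure of $X_t$ and the near-commutation property, tracking the letters just past position $q_t$ to see that a border of $X_t$ of length at most $q_t-q_{t-1}$ cannot be prolonged. This lemma is the natural generalization of the Currie--Saari least-period computation, and its boundary analysis is where the real work lies; once it is in hand, the arguments of the two preceding paragraphs close the induction, giving $\PER(n)=A(n)$ and hence part~(a).
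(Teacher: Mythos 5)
Your architecture is genuinely different from the paper's --- you replace Frid's ``legal implies valid'' theorem (which the paper uses to get $Y_n=X_t^{d_t}\cdots X_0^{d_0}$ outright) by a digit-peeling reduction $p\in\PER(n)\Leftrightarrow p-d_tq_t\in\PER(n-d_tq_t)$ for $p>d_tq_t$, and your sliding argument correctly funnels everything into a single least-period lemma. But that lemma is exactly where the attempt stops being a proof. You concede that Fine--Wilf only eliminates a period $p$ of $X_t$ with $q_{t-1}\le p<q_t$ when $n\ge p+q_t-\gcd(p,q_t)$, and for the remaining regime you offer only that you ``would argue directly from the explicit structure of $X_t$\dots tracking the letters just past position $q_t$,'' adding yourself that this ``is where the real work lies.'' That is a genuine gap, not a deferred routine check: since by monotonicity of the least period it suffices to treat the minimal length $n=q_t+q_{t-1}-1$, essentially every candidate $p$ other than $q_{t-1}$ itself lands in this boundary regime (already for the Fibonacci word, $p=7$ as a period of $Y_8$ versus $Y_{12}$ escapes Fine--Wilf and must be killed by hand). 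The paper closes precisely this hole with a tool you do not invoke: $Y_{q_t+q_{t-1}-2}$ is a \emph{central word} whose least period is $q_{t-1}$ (Carpi--de Luca), and Fine--Wilf is then applied to that central word rather than to $Y_n$, which rules out any period strictly between $q_{t-1}$ and $q_t$. Without this, or an actually executed letter-tracking computation, the statement $\per(Y_n)=q_t$ is unproved and the whole induction collapses.

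Two smaller points. In the case $s\ge1$ of your sliding argument you justify that $\ell=(d_t-s)q_t+n'$ still has lazy top index $t$ by citing $n'\ge1$; what is actually needed is $\ell\ge q_t+q_{t-1}-1$, i.e.\ $n'\ge q_{t-1}-1$. This is true, but it follows from the lazy conditions (f)--(g) together with the telescoping identity (the minimal admissible tail $0\,a_{t-1}\,0\,a_{t-3}\cdots$ already sums to $q_{t-1}-1$), not from $n'\ge1$ alone. Moreover the claim $n'\ge1$ itself can fail when $a_1=1$, since then $d_0$ is forced to be $0$ and a tail of all zeros is admissible; you should normalize $\alpha$ so that $a_1\ge2$, as the paper implicitly does with its ``suitable $\alpha$.'' Neither of these is fatal; the unproved boundary case of the least-period lemma is.
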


Part (a) follows immediately from part (b), so it suffices to prove (b) alone.
We need some preliminary lemmas.

\begin{lemma}\label{lazy_len}
The lazy Ostrowski representation of $n$ has length $t+1$
if and only if $$q_t+q_{t-1}-1 \leq n \leq q_{t+1}+q_{t}-2.$$
\end{lemma}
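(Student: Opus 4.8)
The plan is to bracket the integers of each lazy length between a maximum and a minimum value, and then invoke the (known) existence and uniqueness of the lazy representation together with the fact that $(q_k)$ is nondecreasing. Precisely, I would establish two bounds: (A) every lazy-admissible digit string supported on positions $0,1,\dots,t$ has value at most $q_{t+1}+q_t-2$; and (B) every lazy-admissible digit string whose top nonzero digit sits in position $t$ (that is, $d_t\ge 1$) has value at least $q_t+q_{t-1}-1$. Granting (A) and (B), the lemma follows quickly. If the lazy representation of $n$ has length $t+1$, then (A) and (B) give $q_t+q_{t-1}-1\le n\le q_{t+1}+q_t-2$. Conversely, if $n$ lies in this interval, its representation cannot have length $\le t$ (else (A) applied with $t$ replaced by $t-1$ forces $n\le q_t+q_{t-1}-2$), nor can its top nonzero digit lie in a position $s\ge t+1$ (else (B) applied at position $s$ gives $n\ge q_s+q_{s-1}-1\ge q_{t+1}+q_t-1$), using that $q_k+q_{k-1}$ is nondecreasing in $k$. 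Hence the length is exactly $t+1$.

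For (A), the bound is a one-line digit maximization. Each admissible digit satisfies $d_0\le a_1-1$ and $d_i\le a_{i+1}$ for $i\ge 1$, so the value is at most $(a_1-1)q_0+\sum_{i=1}^{t}a_{i+1}q_i$. Using the convergent recurrence in the form $a_{i+1}q_i=q_{i+1}-q_{i-1}$, this sum telescopes to $q_{t+1}+q_t-2$; and the all-maximal string is itself admissible (all its digits are nonzero, so the carrying conditions are vacuous), so the bound is attained.

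The heart of the argument is (B), where the lazy carrying rule does the real work. Writing the value as $d_t q_t+(\text{lower part})$ with $d_t\ge 1$, it suffices to show that the minimal value $M_{t-1}$ of an admissible suffix occupying positions $0,\dots,t-1$ equals $q_{t-1}-1$. I would define $M_k$ for $k\ge 0$ to be this suffix minimum and derive the recurrence $M_k=\min\bigl(q_k+M_{k-1},\,a_k q_{k-1}+M_{k-2}\bigr)$ for $k\ge 2$, whose two cases correspond to $d_k\ge 1$ and to $d_k=0$; in the latter case the lazy rule ($d_k=0\Rightarrow d_{k-1}=a_k$, valid for $k\ge 2$) forces the cost $a_k q_{k-1}$ and frees positions $0,\dots,k-2$. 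With base cases $M_0=0=q_0-1$ and $M_1=a_1-1=q_1-1$ (the latter coming from the rule $d_1=0\Rightarrow d_0=a_1-1$), an induction using $a_k q_{k-1}+q_{k-2}=q_k$ shows that the second branch always wins, giving $M_k=q_k-1$. Hence the minimal admissible string with $d_t\ge 1$ has value $q_t+(q_{t-1}-1)=q_t+q_{t-1}-1$, which is exactly (B).

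The main obstacle is getting the lazy admissibility conditions exactly right in (B): the reason the minimum is not simply $q_t$ (from $d_t=1$ with all lower digits zero) is precisely that a zero digit forces the next-lower digit to be maximal, and one must verify that this cascade is compatible with $d_t=1$ (it is, since the rule is triggered only by a zero digit above, and $d_t\ge 1$) and must handle the boundary indices $k\in\{0,1\}$, as well as the degenerate case $a_1=1$, where the two carrying rules behave slightly differently. Once the recurrence and its base cases are pinned down, both the induction and the final assembly of the equivalence are routine.
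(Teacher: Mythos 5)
Your proof is correct, and it overlaps with the paper's only on the upper endpoint. There the two arguments coincide: the paper also maximizes each digit subject to $d_0\le a_1-1$ and $d_i\le a_{i+1}$ (noting the all-maximal string is admissible since the carrying condition is vacuous) and telescopes to $q_{t+1}+q_t-2$; it splits the telescoping into even- and odd-indexed chains of the convergent recurrence, whereas your single identity $a_{i+1}q_i=q_{i+1}-q_{i-1}$ does the same job more compactly. The genuine divergence is at the lower endpoint. The paper never proves it directly: having computed the maximum $N_t=q_{t+1}+q_t-2$ attainable with length $t+1$, it implicitly invokes existence and uniqueness of the lazy representation to conclude that the length-$(t+1)$ integers are exactly those in the interval $(N_{t-1},N_t]$, so the lower bound $q_t+q_{t-1}-1=N_{t-1}+1$ comes for free from the length-$t$ maximum. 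You instead prove the minimum head-on via the recurrence $M_k=\min\bigl(q_k+M_{k-1},\,a_kq_{k-1}+M_{k-2}\bigr)$ for the minimal admissible suffix, with base cases $M_0=q_0-1$ and $M_1=q_1-1$ and the identity $a_kq_{k-1}+q_{k-2}=q_k$ forcing $M_k=q_k-1$; this is precisely where the lazy carrying rule does work, and it is the step the paper glosses over. Your converse direction still needs existence of a lazy representation for every $n$ (as does the paper's), but beyond that your argument is self-contained and makes explicit what the carrying conditions buy, at the cost of one extra induction. Both routes are sound; yours documents the lower endpoint that the paper justifies only by appeal to the numeration system being a bijection between integers and admissible strings.
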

\begin{proof}
The largest integer $N$ represented by a lazy Ostrowski
representation of length $t+1$ is the one
where the coefficient of each $q_i$ takes the
maximum possible values allowed by
conditions (d) and (e) above, but ignoring
condition (f); namely
$N = a_1 - 1 + \sum_{1 \leq i \leq t} a_{i+1} q_i$.
Suppose $t$ is even; an analogous proof works for
the case of $t$ odd.   Then
\begin{align*}
q_{t+1} &= a_{t+1} q_t + q_{t-1} \\
q_{t-1} &= a_{t-1} q_{t-2} + q_{t-3} \\
& \quad \vdots \\
q_1 &= a_1 q_0 + 0, 
\end{align*}
which, by telescoping cancellation, gives
\begin{equation}
q_{t+1} = a_{t+1} q_t + a_{t-1} q_{t-2} + \cdots+
a_1 q_0 .
\label{tel1}
\end{equation}
Similarly
\begin{align*}
q_t &= a_t q_{t-1} + q_{t-2} \\
q_{t-2} &= a_{t-2} q_{t-3} + q_{t-4} \\
&\quad \vdots\\
q_2 &= a_2 q_1 + q_0,
\end{align*}
which, by telescoping cancellation, gives
\begin{equation}
q_t = a_t q_{t-1} + a_{t-2} q_{t-3} + \cdots + 
a_2 q_1 + q_0.
\label{tel2}
\end{equation}
Adding Eqs.~\eqref{tel1} and \eqref{tel2} gives
$q_t + q_{t+1} =
1 + a_1 q_0 + \sum_{1 \leq i \leq t} a_{i+1} q_i $,
and hence $N = q_t + q_{t+1} - 2$, as desired.
\end{proof}

\begin{lemma}\label{AsubPER}
We have $A(n) \subseteq \PER(n)$.
\end{lemma}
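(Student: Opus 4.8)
The plan is to show that every element of $A(n)$ is a genuine period of $Y_n$, treating the elements one at a time and inducting on the length $t+1$ of the lazy Ostrowski representation of $n$ (equivalently, on $n$). Throughout I will lean on three elementary facts. First, \emph{inheritance}: if $p$ is a period of $Y_m$ and $p \le m' \le m$, then $p$ is a period of $Y_{m'}$, since $Y_{m'}$ is a prefix of $Y_m$. Second, \emph{multiples}: if $p$ is a period of a word $w$ and $ep \le |w|$, then $ep$ is also a period of $w$. Third, the \emph{central-prefix fact}: for every $k$, the denominator $q_k$ is a period of $Y_m$ for all $q_k \le m \le q_{k+1}+q_k-2$. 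This last fact is standard, since the prefix of $\mathbf{x}_\alpha$ of length $q_{k+1}+q_k-2$ is a central word, a palindrome having both $q_k$ and $q_{k+1}$ as periods (see, e.g., \cite{Berstel&Lauve&Reutenauer&Saliola:2009}); inheritance then propagates the period $q_k$ down to every shorter admissible length.

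Two families of elements of $A(n)$ are easy. For the \emph{top} periods ($j=t$) we have $p = e q_t$ with $1 \le e \le d_t$. By Lemma~\ref{lazy_len}, $q_t \le n \le q_{t+1}+q_t-2$, so the central-prefix fact makes $q_t$ a period of $Y_n$, and since $e q_t \le d_t q_t \le n$ the multiples fact upgrades this to every $e q_t$. This already disposes of the base case $t=0$, where $Y_n = 0^{d_0}$ and $A(n)=\{1,\dots,d_0\}$. For the \emph{bottom} periods ($j=0$) we have $p = e + \sum_{0<i\le t} d_i q_i$ with $1\le e\le d_0$, which correspond to borders of $Y_n$ of length $d_0-e \le d_0-1 \le a_1-1$; since the prefix of $\mathbf{x}_\alpha$ of length $a_1-1$ is $0^{a_1-1}$, these are exactly the powers $0^{d_0-e}$, and one only needs to check that $Y_n$ ends in at least $d_0-1$ zeros. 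This is a direct trailing-zeros computation from the digit conditions (d)–(g).

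The substantive case is $1 \le j < t$. Here I would exploit the decomposition forced by the period $q_t$: as $Y_n$ has period $q_t$, we may write $Y_n = X_t^{\,s}P$, where $s = \lfloor n/q_t\rfloor \ge 1$ and $P = Y_{\,n-sq_t}$ is a proper prefix of the standard word $X_t = Y_{q_t}$. An element $p = e q_j + \sum_{j<i\le t} d_i q_i$ with $j<t$ corresponds to the border of $Y_n$ of length $\ell = n-p = \sum_{i<j} d_i q_i + (d_j-e)q_j$, whose top index is at most $j<t$, so $\ell < q_t$. The point is that such short borders of $Y_n$ are controlled by the \emph{internal} border structure of $X_t$ together with the remainder $P$: because $X_t$ is a central word it is a palindrome, $P$ is a prefix of $X_t$, and $q_t < n$, so the induction hypothesis $A(q_t) \subseteq \PER(q_t)$ supplies precisely the borders of $X_t$ that we need, and palindromicity converts each prefix-match into an honest prefix-equals-suffix border of $Y_n$. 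Note the induction is well founded because the lazy representation of $q_t$ has top index $t-1 < t$.

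The main obstacle will be the bookkeeping at the seam. I must verify that reducing $p = e q_j + \sum_{j<i\le t} d_i q_i$ modulo $q_t$ lands on a border length of $X_t$ whose lazy Ostrowski digits are exactly $d_{t-1}\cdots d_0$ suitably truncated, and that the matching holds on the nose in the \emph{straddling} regime $|P| < \ell < q_t$ (where the relevant suffix of $Y_n$ overlaps both $P$ and the final copy of $X_t$) and when $s>1$, so that the multiple-of-$q_t$ borders and the internal borders of $X_t$ interleave without gaps or repetition. This is exactly where Lemma~\ref{lazy_len} and conditions (d)–(g) must be invoked to pin down how the representation of $n$ restricts to those of $q_t$ and $P$. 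As a conceptual alternative to this combinatorial induction, one could pass to the rotation picture via $x_i = 1 \iff \{i\alpha\}\in(1-\alpha,1)$: then $p$ is a period of $Y_n$ iff none of $\{\alpha\},\dots,\{(n-p)\alpha\}$ lies in the length-$\|p\alpha\|$ neighbourhoods of the two breakpoints $0$ and $1-\alpha$, after which the three-distance theorem and the best-approximation property of the convergents finish the argument; there the crux becomes the estimate of $\|p\alpha\|$ for $p = e q_j + \sum_{j<i\le t} d_i q_i$, obtained by telescoping the identities $q_i\alpha - p_i = (-1)^i\|q_i\alpha\|$.
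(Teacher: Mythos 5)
Your treatment of the extreme cases is fine as far as it goes (the $j=t$ periods via the central-prefix fact plus the multiples fact is a clean argument, and it does dispose of the base case), but the heart of the lemma --- the periods with $1 \le j < t$ --- is not actually proved: you explicitly defer ``the bookkeeping at the seam,'' and that bookkeeping is precisely the content of the lemma. Moreover, the sketch you give for that case rests on the assertion that $X_t$ ``is a central word [and hence] a palindrome.'' This is false: $X_t = Y_{q_t}$ is a \emph{standard} word, and standard words are not palindromes in general (for the Fibonacci word, $X_3 = 01001$); only the word obtained by deleting the last two letters of a standard word is central. So the mechanism you propose --- ``palindromicity converts each prefix-match into an honest prefix-equals-suffix border of $Y_n$'' --- does not apply to $X_t$ as stated, and the inductive hypothesis $A(q_t) \subseteq \PER(q_t)$ only hands you borders of $X_t$; transporting them across the seam between the last full copy of $X_t$ and the residual prefix $P$ (and interleaving them correctly with the multiples of $q_t$) is exactly what remains open. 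The $j=0$ case likewise leans on an unverified trailing-zeros claim about $Y_n$.

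The paper closes all of these gaps with one tool you did not use: Frid's theorem that every \emph{legal} Ostrowski representation is \emph{valid}, i.e., that $Y_n = X_t^{d_t} X_{t-1}^{d_{t-1}} \cdots X_0^{d_0}$ whenever $0 \le d_i \le a_{i+1}$. Since the lazy representation is legal, the elements of $A(n)$ are exactly the lengths of the prefix products of this factorization, and the complementary suffix of $Y_n$ is itself a product $X_j^{d_j-e} X_{j-1}^{d_{j-1}} \cdots X_0^{d_0}$ arising from a legal (indeed lazy) representation of $n-p$; by validity again this suffix equals $Y_{n-p}$, a prefix of $Y_n$, which is precisely the border required. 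The only quantitative input is Lemma~\ref{lazy_len}, used to bound the length of the tail: for the powers of $X_t$ it shows $Y_n$ is a prefix of $X_t^{e+2}$, and for the remaining elements it shows the period exceeds $n/2$ so that the suffix is automatically a prefix of the period. If you want to salvage your route, you should either import Frid's validity theorem outright or replace the (false) palindromicity of $X_t$ by that of the central word $Y_{q_{t+1}+q_t-2}$, which genuinely has periods $q_t$ and $q_{t+1}$; as written, however, the proposal does not constitute a proof.
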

\begin{proof}
Frid \cite{Frid:2018} defined two kinds of representations in the Ostrowski system.
A representation $n = \sum_{0 \leq i \leq t} d_i q_i$ is {\it legal\/} if $0 \leq d_i \leq a_{i+1}$.
A representation $n = \sum_{0 \leq i \leq t} d_i q_i$
is {\it valid\/} if 
$Y_n = X_t^{d_t} \cdots X_0^{d_0}$.    She
proved \cite[Corollary 1, p.~205]{Frid:2018} that every legal representation is valid.  Since the lazy Ostrowski representation is legal 
\cite[Thm.~47]{Epifanio&Frougny&Gabriele&Mignosi&Shallit:2012}, it follows that if
$n = \sum_{0 \leq i \leq t} d_i q_i$ is the lazy
Ostrowski representation of $n$, then
$Y_n = X_t^{d_t} \cdots X_0^{d_0}$.

We now argue that (thinking of each $X_i$ as a single
symbol) that every nonempty prefix of $X_t^{d_t} \cdots X_0^{d_0}$ is a period of $Y_n$.   In other words, 
\begin{align}
& X_t,\ X_t^2,\ \ldots,\ X_t^{d_t},  \nonumber \\
& X_t^{d_t} X_{t-1},\ X_t^{d_t} X_{t-1}^2,\ \ldots,\  X_t^{d_t} X_{t-1}^{d_{t-1}},  \nonumber \\
& \ldots,  \label{periodlist} \\
& X_t^{d_t} X_{t-1}^{d_{t-1}} \cdots X_1^{d_1} X_0,\ 
X_t^{d_t} X_{t-1}^{d_{t-1}} \cdots X_1^{d_1} X_0^2 ,\ 
\ldots, \ 
 X_t^{d_t} X_{t-1}^{d_{t-1}} \cdots X_1^{d_1} X_0^{d_0}.
\nonumber
\end{align}
are all periods of $Y_n$.

We first handle the periods in the first line of
\eqref{periodlist}, which are all powers of $X_t$.  Note that
every nonempty suffix of a lazy representation is also
lazy, and hence from
Lemma~\ref{lazy_len} we know that
$|X_{t-1}^{d_{t-1}} \cdots X_0^{d_0}| \leq q_t +  q_{t-1} -2
= |X_t X_{t-1}| - 2$.   Furthermore every lazy representation
is valid, so
$Y_n = X_t^{e_t} Z$, where
$Z = Y_{n-e_tq_t}$ is a (possibly empty)
prefix of $X_t X_{t-1}$.
Then $Y_n = X_t^{e_t} Z$ is a prefix
of $X_t^{e_t} X_t X_{t-1}$, which is a prefix
of $X_t^{e_t+2}$, which has period $X_t^j$ for
$0 \leq j\leq e_t$.

Next, we handle the remaining periods, if
there are any.   The next
one in the list \eqref{periodlist} to consider is $X_t^{d_t} X_r$, where $r$ is the largest index $< t$ satisfying
$d_r > 0$.  Thus $Y_n = X_t^{d_t} X_r Z'$, where
$Z' = Y_{n- d_t q_t - q_r}$.  There are two cases to
consider:
\begin{itemize}
\item If $r = t-1$, then 
$ X_r Z' = X_{t-1}^{d_{t-1}} \cdots X_0^{d_0} $
and hence, as above $|X_r Z'| \leq q_t + q_{t-1} - 2$.
It follows that 
$|X_t^{d_t} X_r| = d_t q_t + q_{t-1} \geq q_t + q_{t-1} > q_t + q_{t-1} - 2 \geq |Z'|$.  

\item If $r \leq t-2$, then 
$$ |X_t^{d_t} X_r | = d_t q_t + q_r \geq q_t = a_t q_{t-1} + q_{t-2} \geq q_{t-1} + q_{t-2} > q_{t-1} + q_{t-2} - 2 \geq |X_{r-1}^{d_{r-1}} \cdots X_0^{d_0}|,$$
where in the last step we have used Lemma~\ref{lazy_len} again.
\end{itemize}

Hence in both cases the next period in the list is
of size greater than $n/2$, and hence so is every
period following it in the list.   Thus for every period $P$ after
the first line we have $Y_n = P Z'$ where
$|P|>|Z'|$.   Since $Z'$ is also a valid Ostrowski representation of $n - |P|$, it follows that 
$Z' = Y_{n- |P|}$ is a prefix of $P$.  Thus
$Y_n$ has period $P$, as desired.  
\end{proof}

\begin{lemma}\label{min_period}
If $q_t+q_{t-1}-1 \leq n \leq q_{t+1}+q_{t}-2$ then
the smallest period of $Y_n$ is at least $q_t$.
\end{lemma}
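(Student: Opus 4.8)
The plan is to reduce to the shortest prefix occurring in the given range and then combine the Fine--Wilf theorem with the central-word structure of Sturmian prefixes.

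First I would use Lemma~\ref{lazy_len}: the hypothesis is equivalent to $n \geq q_t + q_{t-1} - 1$, so $U := Y_{q_t + q_{t-1} - 1}$ is a prefix of $Y_n$. Since any period $p < q_t \leq |U|$ of $Y_n$ is automatically a period of its prefix $U$, it suffices to prove that $U$, a word of length $q_t + q_{t-1} - 1$, has no period $p$ with $1 \leq p < q_t$.

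Next I would record the structural facts I need. Writing $C$ for the length-$(q_t + q_{t-1} - 2)$ prefix of $U$, the standard theory of Sturmian words gives $X_t X_{t-1} = C\,xy$ and $X_{t-1} X_t = C\,yx$, where $C$ is a palindrome (a central word) and $x \neq y$ are the two letters \cite{Berstel&Seebold:2002}. Since $X_{t-1}$ is a prefix of $X_t$, the word $X_t X_{t-1}$ has period $q_t$, and since $X_{t-1} X_t = X_{t-1}^{a_t+1} X_{t-2}$ it has period $q_{t-1}$; hence the common prefix $C$ has the two coprime periods $q_{t-1}$ and $q_t$. Finally, $U$ is the length-$(|C|+1)$ prefix of ${\bf x}_\alpha$, so $U = Cx$, whereas $Cy$, being the length-$(|C|+1)$ prefix of $X_{t-1} X_t$, has period $q_{t-1}$ and therefore satisfies $C[q_t - 1] = y$. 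Thus appending the correct letter $x$ preserves the period $q_t$ but destroys the period $q_{t-1}$.

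Now suppose for contradiction that $U$ has a period $p$ with $1 \leq p < q_t$; then so does its prefix $C$. I would split into two cases according to whether $q_{t-1}$ divides $p$. If $q_{t-1} \nmid p$, then $C$ has the two periods $p$ and $q_{t-1}$, and a length check shows Fine--Wilf \cite{Fine&Wilf:1965} applies, yielding the period $g := \gcd(p, q_{t-1}) < q_{t-1}$; since $g \mid q_{t-1}$ we have $\gcd(g, q_t) = 1$, and a second application of Fine--Wilf to the periods $g$ and $q_t$ of $C$ forces $C$ to be constant, contradicting $x \neq y$. If instead $q_{t-1} \mid p$, write $p = m q_{t-1}$ and set $j := |U| - p = q_t - 1 - (m-1) q_{t-1} \geq 1$. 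The period $p$ of $U$ gives $C[j] = U[j] = U[j+p] = U[|U|] = x$, while climbing the period-$q_{t-1}$ ladder inside $C$ gives $C[j] = C[j + q_{t-1}] = \cdots = C[q_t - 1] = y$; as $x \neq y$ this is again a contradiction. Hence $U$, and therefore $Y_n$, has no period smaller than $q_t$.

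I expect the divisibility case to be the main obstacle: when $q_{t-1} \mid p$ the naive Fine--Wilf argument against the period $q_t$ breaks down, because the required length inequality can fail, and one is forced to use the precise identity of the extending letter $x$ and the fact that it kills the period $q_{t-1}$. Checking the various length conditions for the Fine--Wilf applications, and handling the small-$t$ degeneracies where $C$ may be empty or a single letter, are routine but should be verified.
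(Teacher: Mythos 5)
Your argument is correct and follows essentially the same route as the paper's: reduce to $n=q_t+q_{t-1}-1$, use the central word $C=Y_{n-1}$ with coprime periods $q_{t-1}$ and $q_t$ together with the almost-commutative property to see that appending the next letter of ${\bf x}_\alpha$ preserves the period $q_t$ but kills $q_{t-1}$, and then eliminate all remaining candidate periods via Fine--Wilf, treating multiples of $q_{t-1}$ separately. The only substantive difference is that where the paper invokes Carpi--de Luca for the \emph{minimality} of $q_{t-1}$ among the periods of the central word and contradicts that minimality, you apply Fine--Wilf a second time to force $C$ to be constant --- which does yield a contradiction since $C[q_{t-1}-1]=x$ and $C[q_t-1]=y$ once $q_{t-1}\geq 2$, the degenerate cases $q_{t-1}=1$ being absorbed by your ladder argument for $q_{t-1}\mid p$ (which, incidentally, makes explicit a step the paper merely asserts).
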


\begin{proof}
It suffices to prove the result for $n=q_t+q_{t-1}-1$, since any period of $Y_{n'}$, $n'>n$, is at least as large as
the smallest period of $Y_n$.  Write $Y_{n+1} = X_tX_{t-1}$, where $|X_t|=q_t$ and $|X_{t-1}|=q_{t-1}$.
Let $ab$ be the last two symbols of $X_{t-1}$.  Then $a \neq b$ and we have the well-known ``almost commutative'' property:
$Y_{t-1} = X_tX_{t-1}(ab)^{-1} = X_{t-1}X_t(ba)^{-1}$.  Consequently, the word $Y_{n-1}$ is a \emph{central word}
and has periods $q_t$ and $q_{t-1}$, with $q_{t-1}$ being its smallest period \cite[Proposition~1]{Carpi&deLuca:2005}.
Since $X_{t-1}$ is a prefix of $X_t$, it is clear that $Y_n$ has period $q_t$.  The word $Y_n$ does not have period
$q_{t-1}$, since it would then be a word of length $q_t+q_{t-1}-1$ with co-prime periods $q_t$ and $q_{t-1}$,
contrary to the Fine-Wilf theorem.  The word $Y_n$ therefore does not have any period that is a multiple of $q_{n-1}$.  Furthermore, if $Y_n$ had a period $q$ with
$q_{t-1} < q < q_t$ and $q$ not a multiple of $q_{n-1}$, then  the central word $Y_{n-1}$ would have period $q$ as
well.  The word $Y_{n-1}$ would then have periods $q$ and $q_{t-1}$, again violating the
Fine-Wilf theorem.  It follows that $Y_n$ has smallest
period $q_t$.
\end{proof}

\begin{lemma}\label{PERsubA}
We have $\PER(n) \subseteq A(n)$.
\end{lemma}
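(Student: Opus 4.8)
The plan is to induct on $n$ and peel off one copy of the smallest period. Fix the lazy Ostrowski representation $n = \sum_{0 \le i \le t} d_i q_i$ with $d_t > 0$. Lemma~\ref{min_period} tells us every period of $Y_n$ is at least $q_t$, while Lemma~\ref{AsubPER} gives $q_t \in A(n) \subseteq \PER(n)$; together these show $\per(Y_n) = q_t$, so the longest proper border of $Y_n$ is exactly $Y_{n-q_t}$. (If instead $q_t = n$, then $Y_n$ is unbordered and $\PER(n) = \{n\}$; since the trivial period always lies in $A(n)$, there is nothing to prove, so we may assume $q_t < n$.)

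Now take any $p \in \PER(n)$. If $p = q_t$ or $p = n$ then $p \in A(n)$ immediately, so assume $q_t < p < n$. Then $Y_n$ has a nonempty proper border $w$ of length $n - p < n - q_t$. Since $w$ and $Y_{n-q_t}$ are both prefixes, and both suffixes, of $Y_n$, and $|w| < |Y_{n-q_t}|$, the word $w$ is simultaneously a prefix and a suffix of $Y_{n-q_t}$, i.e.\ a proper border of $Y_{n-q_t}$. Reading this border as a period shows $p - q_t \in \PER(n-q_t)$. Hence it suffices to understand $\PER(m)$ for $m := n - q_t$ and then to verify the set inclusion $q_t + A(m) \subseteq A(n)$.

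The crux is identifying the lazy Ostrowski representation of $m$, which I would handle by cases on the leading digit $d_t$. If $d_t \ge 2$, decrementing the leading digit produces digits $(d_t-1)\,d_{t-1}\cdots d_0$ that still satisfy all the lazy conditions: the only condition affected is the one at index $t$, which remains vacuous because $d_t - 1 \ge 1$, and $d_t - 1 \le a_{t+1}$ still holds. So this is the lazy representation of $m$, of the same length $t+1$. If $d_t = 1$, then $m = \sum_{0 \le i < t} d_i q_i$; letting $r < t$ be the largest index with $d_r > 0$, the lazy condition (f) forces $d_r = a_{r+1}$ whenever a zero block is present and in any case keeps $d_r \le a_{r+1}$, so dropping the leading zeros yields a genuine lazy representation $d_r \cdots d_0$ of $m$ (its length being consistent with Lemma~\ref{lazy_len}). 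In either case the inductive hypothesis gives $\PER(m) = A(m)$, and a direct computation—using $d_t q_t = q_t + (d_t-1)q_t$ in the first case, and $\sum_{j < i \le t} d_i q_i = q_t + \sum_{j < i \le r} d_i q_i$ in the second (as $d_t = 1$ and the intermediate digits vanish)—shows that $\{q_t\} \cup \bigl(q_t + A(m)\bigr) = A(n)$. Thus $p = q_t + (p - q_t) \in q_t + A(m) \subseteq A(n)$, completing the induction.

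The main obstacle is the bookkeeping of this final case analysis: checking that subtracting $q_t$ really does return the lazy representation claimed—in particular the leading-zero collapse when $d_t = 1$, which must be justified through condition (f) and the uniqueness of lazy representations—and then matching the shifted family $q_t + A(m)$ against $A(n)$ term by term. By contrast, the border-reduction step and the identification $\per(Y_n) = q_t$ are routine given Lemmas~\ref{min_period} and \ref{AsubPER}; the entire argument hinges on how subtraction of $q_t$ interacts with the admissibility rules for lazy digits.
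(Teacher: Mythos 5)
Your proof is correct and follows essentially the same route as the paper's: an induction on $n$ that transfers a period $p$ of $Y_n$ to a period of a shorter prefix by subtracting a known period of $Y_n$, identifies the lazy Ostrowski representation of the difference by digit subtraction, and then invokes the induction hypothesis, all anchored by Lemmas~\ref{min_period} and \ref{AsubPER}. The only difference is organizational: you always subtract the least period $q_t$ and verify the recursion $A(n)=\{q_t\}\cup\bigl(q_t+A(n-q_t)\bigr)$ directly, whereas the paper subtracts the largest element $p_1$ of $A(n)$ below a hypothetical extraneous period $p$ and reaches a contradiction because $p-p_1$ would be a period of $Y_{n-p_1}$ smaller than the least element of $A(n-p_1)$.
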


\begin{proof}
The proof is by induction on $n$.  Certainly the result holds for $n=1$.
Suppose the lazy Ostrowski representation of $n$
is $\sum_{0 \leq i \leq t} d_i q_i$.   By Lemma~\ref{lazy_len} we have $q_t+q_{t-1}-1 \leq n \leq q_{t+1}+q_{t}-2$.  Suppose that the elements of $A(n)$ are ordered by size and note that $q_t$ and $n$ are the least and greatest elements of $A(n)$ respectively.

By Lemma~\ref{min_period}, the minimal period of $Y_n$ is at least $q_t$, and clearly the maximal period of $Y_n$ is $n$.
Consequently, if there is some $p \in \PER(n)$ such that $p \notin A(n)$, then there are two consecutive periods $p_1, p_2 \in A(n)$ such that $p_1 < p < p_2$.  We find then that $Y_{n - p_1}$ has periods $p_2 - p_1$ and $p - p_1$. 

By the definition of $A(n)$, the period $p_1$ has the form $$p_1 = d_tq_t + d_{t-1}q_{t-1} + \cdots + d_{j+1}q_{j+1} + aq_j$$ for some $a \leq d_j$.  Hence $n - p_1$ has lazy representation
(possibly including some leading 0's) $(d_j-a) d_{j-1} \cdots d_0$.
By the induction hypothesis, we have $\PER(n-p_1) \subseteq A(n-p_1)$.
However, since $p_2$ and $p_1$ are consecutive periods of $Y_n$, we have $p_2 - p_1 = q_j$ if $a<d_j$
or $p_2 - p_1 = q_{j'}$, where $j'$ is the largest index $<j$ such that $d_{j'}>0$, if $a=d_j$.  By the
definition of $A(n - p_1)$, the least element of $A(n-p_1)$ is $q_j$ if $a<d_j$ or $q_{j'}$ if $a=d_j$.  It
follows that $p_2-p_1$ is the least element of $A(n-p_1)$. 
However, $p - p_1$ is smaller than $p_2 - p_1$, so we have $p-p_1 \in \PER(n-p_1)$ but $p-p_1 \notin A(n-p_1)$
which is a contradiction.
\end{proof}
Theorem~\ref{ostt} now follows from Lemmas~\ref{AsubPER} and \ref{PERsubA}.

Let us now apply these results to the infinite Fibonacci word ${\bf f} = 01001010\cdots$, which equals
the Sturmian characteristic word ${\bf x}_{\alpha}$ for
$\alpha = (3-\sqrt{5})/2 = [0,2,1,1,1,\ldots]$.  Recall that the $n$'th Fibonacci number is defined by
$F_0 = 0$, $F_1 = 1$, and $F_n = F_{n-1} + F_{n-2}$ for $n \geq 2$.
An easy induction shows that $q_i = F_{i+2}$ for
$i \geq 0$.   Here the ordinary Ostrowski representation
corresponds to the familiar and well-studied Fibonacci (or Zeckendorf) representation \cite{Lekkerkerker:1952,Zeckendorf:1972} as a sum of distinct Fibonacci numbers.   The lazy Ostrowski representation, on the other hand, corresponds to the so-called ``lazy Fibonacci representation'', as studied by Brown \cite{Brown:1965}.  This representation has the property that it contains no two consecutive $0$'s.

Theorem~\ref{ostt} now has the following implications for
the Fibonacci word.
\begin{corollary}
\leavevmode
\begin{enumerate}[(a)]
\item If the lazy Fibonacci representation of $n$ is
$n = F_{t_1} + F_{t_2} + \cdots + F_{t_r}$,
for $t_1 < t_2 < \cdots < t_r$, then the periods of
the length-$n$ prefix of the Fibonacci word are
$$F_{t_r},\ F_{t_r} + F_{t_{r-1}},\ F_{t_r} + F_{t_{r-1}} + F_{t_{r-2}},\ \ldots,\ 
F_{t_r} + F_{t_{r-1}} + \cdots + F_{t_1}.$$

\item 
The shortest prefix of $\bf f$ having exactly $n$ periods (including the trivial period) is of
length $F_{n+3} - 2$, for $n \geq 1$. 

\item 
The longest prefix of $\bf f$ having exactly $n$
periods (including the trivial period) is of 
length $F_{2n+2} - 1$, for $n \geq 1$.

\item 
The least period of ${\bf f}[0..m-1]$ is $F_n$
for $F_{n+1} - 1 \leq m \leq F_{n+2} - 2$ and
$n \geq 2$.
\end{enumerate}
\label{ten}
\end{corollary}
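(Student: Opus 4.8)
The plan is to derive all four parts directly from Theorem~\ref{ostt} and its supporting lemmas, specialized to $\alpha = (3-\sqrt{5})/2$, whose continued fraction has $a_1 = 2$ and $a_i = 1$ for $i \geq 2$, so that $q_i = F_{i+2}$. The key simplification is that here every lazy Ostrowski digit $d_i$ lies in $\{0,1\}$ (since $a_{i+1}=1$ for $i \geq 1$ and $a_1 = 2$), so the lazy representation of $n$ is precisely its lazy Fibonacci representation: a $0$--$1$ string $d_t\cdots d_0$ with $d_t = 1$ and no two consecutive zeros. Part (a) is then immediate from Theorem~\ref{ostt}(b): for a nonzero digit $d_j$ the only admissible $e$ is $e=1$, so $A(n)$ consists of the partial sums $\sum_{i \geq j,\, d_i = 1} q_i$ taken over the support indices $j$, which upon writing $n = F_{t_1}+\cdots+F_{t_r}$ become exactly $F_{t_r},\ F_{t_r}+F_{t_{r-1}},\ \ldots,\ F_{t_r}+\cdots+F_{t_1}$.

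For parts (b) and (c) I would invoke Theorem~\ref{ostt}(a): the number of periods of $Y_m$ (including the trivial one) is the digit sum of the lazy representation of $m$, here just the number of ones. So the shortest (resp.\ longest) prefix with exactly $n$ periods is the smallest (resp.\ largest) $m$ whose lazy Fibonacci string has exactly $n$ ones. For the minimum, pack the ones into the lowest positions $0,1,\ldots,n-1$; this string $1^n$ is vacuously lazy and uses the $n$ smallest Fibonacci numbers $F_2,\ldots,F_{n+1}$, hence is optimal, giving $m = \sum_{j=2}^{n+1} F_j = F_{n+3}-2$. For the maximum I would encode the lazy conditions as gap bounds: if the ones occupy positions $p_1 < \cdots < p_n$ then $p_1 \leq 1$ and $p_{i+1}-p_i \leq 2$ (at most one zero below, or between, the ones). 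An easy induction then gives $p_i \leq 2i-1$, with equality realized by the alternating string $1010\cdots10$ (ones at $1,3,\ldots,2n-1$), which is itself a valid lazy representation; since the Fibonacci sequence is increasing, pushing every one as high as the bounds allow maximizes the weighted sum, yielding $m = \sum_{i=1}^{n} F_{2i+1} = F_{2n+2}-1$.

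Part (d) I would read off by combining Lemma~\ref{min_period} with the characterization: the least period of $Y_m$ equals $q_t$, where $t$ is the top index of the lazy representation of $m$ (it is at least $q_t$ by Lemma~\ref{min_period}, and is a period by part (a), being the smallest listed period). By Lemma~\ref{lazy_len} the top index is $t$ exactly when $q_t + q_{t-1} - 1 \leq m \leq q_{t+1} + q_t - 2$; substituting $q_i = F_{i+2}$ turns this into $F_{t+3}-1 \leq m \leq F_{t+4}-2$ with least period $F_{t+2}$, and setting $n = t+2$ (so $n \geq 2$) gives precisely least period $F_n$ on $F_{n+1}-1 \leq m \leq F_{n+2}-2$.

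The routine but essential arithmetic throughout is the Fibonacci identities $\sum_{j=2}^{M} F_j = F_{M+2}-2$ and $\sum_{i=1}^{n} F_{2i+1} = F_{2n+2}-1$, which convert the extremal digit configurations into closed form. I expect the only genuinely non-mechanical step to be the maximization in (c): one must correctly translate the lazy conditions (d)--(g) into the gap bounds $p_1 \leq 1$ and $p_{i+1}-p_i \leq 2$, verify that the extremal alternating pattern is itself lazy, and confirm that this configuration maximizes the sum. Parts (a), (b), and (d) are then direct specializations and index bookkeeping.
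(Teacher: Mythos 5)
Your proposal is correct and follows essentially the same route as the paper: specialize Theorem~\ref{ostt} to $\alpha=(3-\sqrt5)/2$ with $q_i=F_{i+2}$, identify the extremal lazy Fibonacci strings $1^n$ and $(10)^n$ for parts (b) and (c), and combine the least element of $A(n)$ with Lemma~\ref{lazy_len} for part (d). You actually supply more detail than the paper does, in particular the gap-bound argument ($p_1\leq 1$, $p_{i+1}-p_i\leq 2$) justifying that $(10)^n$ maximizes the value among lazy strings with $n$ ones, which the paper simply asserts.
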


\begin{proof}
\leavevmode
\begin{enumerate}[(a)]
\item This is just a restatement of Theorem~\ref{ostt} for
the special case $\alpha = (3-\sqrt{5})/2$.

\item This corresponds to the lazy Fibonacci
representation $\overbrace{11\cdots 1}^n$, which
equals the sum $F_2 + F_3 + \cdots + F_{n+1}$,
for which a classical Fibonacci identity gives $F_{n+3} - 2$.

\item This corresponds to the lazy Fibonacci
representation $(10)^n$, which equals the sum
$F_3 + F_5 + \cdots + F_{2n+1}$, for which a classical
Fibonacci identity gives $F_{2n+2} - 1$.

\item Theorem~\ref{ostt} implies that the least period
of every $n$ with Ostrowski representation of length
$t$ is $F_{t+1}$.   Lemma~\ref{lazy_len} implies that
$q_{t-1} + q_{t-2} - 1 \leq n \leq q_t + q_{t-1} - 2$;
in other words, $F_{t+1} + F_{t} - 1 \leq n \leq F_{t+2} + F_{t+1} - 2$, or $F_{t+2} - 1 \leq n \leq F_{t+3} - 2$.   
\end{enumerate}
\end{proof}

For another connection between Ostrowski numeration and periods of Sturmian words,
see \cite{Schaeffer:2013}.  Saari \cite{Saari:2007} determined the least period of
every factor of the Fibonacci word, not just the prefixes; also see
\cite[Thm.~3.15]{Mousavi&Schaeffer&Shallit:2016}.

\section{Tightness of the period inequality}

Returning to our period inequality,
it is natural to wonder if the bound \eqref{bound1} is tight.
We exhibit a class of binary words for which it is.

Let $g_s$, for $s \geq 1$, be the prefix of length
$F_{s+2} - 2$ of $\bf f$.
Thus, for example, $g_1 = \epsilon$, $g_2 = 0$, $g_3 = 010$,
$g_4 = 010010$, and so forth. 
We now show that the bound \eqref{bound1} is tight, up to an 
additive factor, for the words $g_s$.  Let
$\tau = (1+\sqrt{5})/2$, the golden ratio.
\begin{theorem}
Take $x = g_s$ for $s \geq 4$.  Then the left-hand side of \eqref{bound1}
is $s-2$, while the right-hand side is
asymptotically $s+c$ for $c = 3 + \tau^2/2 - (\ln 2 \sqrt{5})/(\ln \tau)
\doteq 1.19632$.
\label{appr}
\end{theorem}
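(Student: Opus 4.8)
The plan is to compute both sides of \eqref{bound1} explicitly for $x=g_s$, drawing on the Fibonacci specialization of Theorem~\ref{ostt} recorded in Corollary~\ref{ten}, and then to let $s\to\infty$. For the left-hand side I need $\nnp(g_s)$. Since $g_s$ has length $F_{s+2}-2$, the lazy Fibonacci representation of this integer is the all-ones string $1^{s-1}$ (the sum $F_2+F_3+\cdots+F_s=F_{s+2}-2$, which is exactly the representation used in the proof of Corollary~\ref{ten}(b)). Its digit sum is $s-1$, so by Theorem~\ref{ostt}(a) the word $g_s$ has $s-1$ periods counting the trivial one, and hence $\nnp(g_s)=s-2$, the asserted left-hand side.

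Next I would determine $e=\ice(g_s)$ exactly. By Corollary~\ref{ten}(d), a prefix of ${\bf f}$ of length $m$ with $F_{k+1}-1\le m\le F_{k+2}-2$ has least period $F_k$, so its exponent $m/F_k$ is largest within this band at $m=F_{k+2}-2$, giving the value $(F_{k+2}-2)/F_k$. I then claim this value increases with $k$: the numerator of $(F_{k+3}-2)/F_{k+1}-(F_{k+2}-2)/F_k$ equals $(F_kF_{k+3}-F_{k+1}F_{k+2})+2(F_{k+1}-F_k)$, and simplifying $F_kF_{k+3}-F_{k+1}F_{k+2}=F_kF_{k+2}-F_{k+1}^2=(-1)^{k+1}$ via Cassini's identity turns this into $(-1)^{k+1}+2F_{k-1}$, which is positive for $k\ge2$. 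Because every prefix of $g_s$ has length at most $F_{s+2}-2$, the supremum defining $\ice$ is attained at $k=s$, and
$$e=\ice(g_s)=\frac{F_{s+2}-2}{F_s}.$$

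Finally I would substitute $n=F_{s+2}-2$ and this exact $e$ into the right-hand side of \eqref{bound1} and expand as $s\to\infty$ using Binet's formula $F_m=(\tau^m-\psi^m)/\sqrt5$ with $\psi=-1/\tau$. This gives $e=\tau^2+O(\tau^{-s})$, so $e/2\to\tau^2/2$, and, using $F_{s+2}-F_s=F_{s+1}$, it gives $e-1=(F_{s+1}-2)/F_s$, whence $e/(e-1)=(F_{s+2}-2)/(F_{s+1}-2)\to\tau$ and $\ln(e/(e-1))=\ln\tau+O(\tau^{-s})$; meanwhile $\ln(n/2)=(s+2)\ln\tau-\ln(2\sqrt5)+o(1)$. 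Combining the three terms yields
$$\frac{e}{2}+1+\frac{\ln(n/2)}{\ln(e/(e-1))}=s+\Bigl(3+\frac{\tau^2}{2}-\frac{\ln(2\sqrt5)}{\ln\tau}\Bigr)+o(1),$$
which is the claimed asymptotic $s+c$. The step I expect to be most delicate is this final ratio: its numerator grows like $s\ln\tau$ while its denominator carries an $O(\tau^{-s})$ perturbation, so I must verify that the resulting cross term is $O(s\,\tau^{-s})=o(1)$ rather than merely controlling each factor to $o(1)$. This is precisely why the earlier \emph{exact} determination of $e$ matters, and so I regard establishing the closed form $e=(F_{s+2}-2)/F_s$ — together with the monotonicity argument showing the initial critical exponent is realized by $g_s$ itself — as the crux of the argument.
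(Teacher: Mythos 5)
Your proposal is correct and follows essentially the same route as the paper: read off $\nnp(g_s)=s-2$ from the Fibonacci specialization of Theorem~\ref{ostt} (Corollary~\ref{ten}(b)), determine $e=\ice(g_s)=(F_{s+2}-2)/F_s$ exactly via Corollary~\ref{ten}(d) plus the monotonicity of $(F_{k+2}-2)/F_k$ (your Cassini computation is the same identity the paper establishes by induction), and then expand the right-hand side asymptotically with Binet's formula. Your added remark about controlling the cross term in the final ratio is a slightly more careful articulation of the asymptotics than the paper gives, but it is not a different argument.
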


\begin{proof}
Take $x = g_s$.   By definition we have $n = |x| = F_{s+2} - 2$.    
By Corollary~\ref{ten} (b) we know that
$g_s$ has $s-1$ periods, and hence $s-2$ nontrivial periods.    Thus $\nnp(x) = s-2$.

Next let's compute $\ice(g_s)$.   Corollary~\ref{ten} (d) states that the least period of the prefix ${\bf f}[0..m-1]$ equals
$F_s$ for $F_{s+1} -1 \leq m \leq F_{s+2} - 2$, $s \geq 2$.  It follows
that the exponent of the prefix ${\bf f}[0..m-1]$ is
$m/F_s$ for $F_{s+1} -1 \leq m \leq F_{s+2} - 2$, $s \geq 2$. For
fixed $s$, the quantity $m/F_s$ is
maximized at $m = F_{s+2} - 2$, which gives an exponent of
$(F_{s+2} - 2)/F_s$.   It remains to see that the
sequence $((F_{s+2} - 2)/F_s)_{s \geq 2}$ is strictly increasing.
For this it suffices to show that
$(F_{s+2} - 2)/F_s < (F_{s+3}-2)/F_{s+1}$ for $s \geq 2$,
or, equivalently,
\begin{equation}
F_{s+2} F_{s+1} - F_s F_{s+3} < 2 F_{s+1} - 2 F_s.
\label{ffe}
\end{equation}
But an easy induction shows that the left-hand side of \eqref{ffe}
is $(-1)^s$, while the right-hand side is $2F_{s-1} \geq 2$.
Thus we see $e = \ice(g_s) = (F_{s+2} - 2)/F_s$.

Hence
the right-hand side of \eqref{bound1} is
$$ {{F_{s+2} - 2} \over {2F_s}} + 1 + 
{{\ln((F_{s+2} - 2)/2)} \over {\ln( {{F_{s+2} - 2} \over {F_{s+1} - 2}} ) } } .$$
Now
use the Binet formula for Fibonacci numbers, which implies
that $F_s \sim \tau^s/\sqrt{5}$, and the fact that
$\lim_{s \rightarrow \infty} F_s/F_{s-1} = \tau$,
to obtain that the right-hand side of \eqref{bound1}
is asymptotically 
$$ {{\tau^2} \over 2} + 1 + (s+2) - (\ln 2\sqrt{5})/(\ln \tau). $$
This gives the desired result.
\end{proof}

\section{Two measures of periodicity}
\label{twomeas}

Corollary~\ref{bound2} suggests that the quantity
$$ M(x) := {{\nnp(x)} \over {\ice(x) \ln |x| }} $$
is a measure of periodicity for finite
words $x$.  It also suggests studying
the following measures of periodicity
for infinite words $\bf x$.  For $n \geq 2$ let
$Y_n$ be the prefix of length $n$ of $\bf x$.  Then define
\begin{align*}
 P({\bf x}) := \limsup_{n \rightarrow \infty}  \ 
M(Y_n)  \\
p({\bf x}) := \liminf_{n \rightarrow \infty}  \ 
M(Y_n)  
\end{align*}
From Theorem~\ref{expbord}, 
we know that for the ``typical'' infinite word $\bf x$ we have
$P({\bf x}) = p({\bf x}) = 0$.   Thus it is of interest to find words $\bf x$
where $P({\bf x})$ and $p({\bf x})$ are
large.
In this section we compute these measures for several infinite words.
\begin{theorem}
Let $\bf f$ denote the Fibonacci infinite word.  Then
$P({\bf f}) = 1/(\tau^2\ln \tau) \doteq 0.79375857$ and
$p({\bf f}) = 1/(2 \tau^2\ln \tau) \doteq 0.396879286$.
\label{fibo2}
\end{theorem}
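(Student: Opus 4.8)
The plan is to reduce the computation of $M(Y_n) = \nnp(Y_n)/(\ice(Y_n)\ln n)$ to three asymptotic ingredients, all extractable from Corollary~\ref{ten}: the number of periods, the initial critical exponent, and the logarithm of the length. Throughout I fix $n$ in the ``block'' $F_{s+1}-1 \le n \le F_{s+2}-2$, on which $\per(Y_m)=F_s$ by Corollary~\ref{ten}(d); by Lemma~\ref{lazy_len} these are exactly the $n$ whose lazy Fibonacci representation has length $s-1$. First I would pin down $\nnp(Y_n)$: by Theorem~\ref{ostt}(a) the number of periods including the trivial one equals the number of $1$'s in the lazy Fibonacci representation of $n$, say $r$, so $\nnp(Y_n)=r-1$. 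Since a lazy Fibonacci representation has no two consecutive $0$'s and a leading $1$, a length-$(s-1)$ representation satisfies $\lceil (s-1)/2\rceil \le r \le s-1$, with the two extremes realized by $1^{s-1}$ (the prefixes $g_s$) and by $(10)^{(s-1)/2}$ (the prefixes of length $F_{s+1}-1$ with $s$ odd).

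The crux is the exact value of $\ice(Y_n)$. Since the nonempty prefixes of $Y_n$ are $Y_1,\dots,Y_n$, we have $\ice(Y_n)=\max_{m\le n} m/\per(Y_m)$. By Corollary~\ref{ten}(d), on each complete block the exponent $m/F_{s'}$ increases in $m$ to the block-maximum $h(s'):=(F_{s'+2}-2)/F_{s'}$, and $h$ is strictly increasing (this monotonicity was already verified in the proof of Theorem~\ref{appr}). Hence the running maximum over all earlier complete blocks is $h(s-1)$, and
$$\ice(Y_n) = \max\left( \frac{F_{s+1}-2}{F_{s-1}},\ \frac{n}{F_s}\right).$$
This gives the sandwich $h(s-1)\le \ice(Y_n)\le h(s)$, and by the Binet formula both bounds tend to $\tau^2$; thus $\ice(Y_n)\to\tau^2$ as $n\to\infty$. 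Finally, Binet and Lemma~\ref{lazy_len} give $\ln n \sim s\ln\tau$ for $n$ in block $s$.

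Combining these, $M(Y_n)=(r-1)/(\ice(Y_n)\ln n)\sim r/(\tau^2 s\ln\tau)$, so the behavior of $M$ is governed entirely by the ratio $r/s\in[1/2+o(1),\,1]$. For the $\limsup$ I would bound $r-1\le s-2$, $\ice(Y_n)\ge h(s-1)$, and $\ln n\ge \ln(F_{s+1}-1)$ to obtain a uniform upper bound on $M(Y_n)$ tending to $1/(\tau^2\ln\tau)$; the prefixes $g_s$ (where $r=s-1$ and $\ice(g_s)=h(s)$ by Theorem~\ref{appr}) realize this limit, so $P({\bf f})=1/(\tau^2\ln\tau)$. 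Symmetrically, for the $\liminf$ I would bound $r-1\ge \lceil (s-1)/2\rceil-1$, $\ice(Y_n)\le h(s)$, and $\ln n\le \ln(F_{s+2}-2)$ to obtain a uniform lower bound tending to $1/(2\tau^2\ln\tau)$, realized by the prefixes of length $F_{2n+2}-1$ with lazy representation $(10)^n$, so $p({\bf f})=1/(2\tau^2\ln\tau)$.

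The main obstacle is the exact determination of $\ice(Y_n)$ through this running-maximum argument, in particular recognizing that near the left end of a block the initial critical exponent is still governed by the previous block's maximum $h(s-1)\to\tau^2$ rather than by the current, much smaller exponent $n/F_s\approx\tau$. It is precisely this uniform convergence $\ice(Y_n)\to\tau^2$ that decouples the extremal analysis, leaving the gap between $P$ and $p$ to come solely from $\nnp/\ln n$, namely from $r/s$ ranging between $1$ and $1/2$. Once $\ice(Y_n)\to\tau^2$ is established, the digit-counting bounds on $r$ and the Binet-formula estimates for $\ln n$ are routine.
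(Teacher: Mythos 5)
Your proposal is correct and follows essentially the same route as the paper, which simply cites Corollary~\ref{ten} together with the $\ice$ computation from the proof of Theorem~\ref{appr}; you have filled in the details of that one-line argument (digit-sum count of periods via the lazy representation, the running-maximum identity $\ice(Y_n)=\max(h(s-1),\, n/F_s)\to\tau^2$, and the $r/s\in[1/2,1]$ range with both extremes realized). The extremal sequences you identify ($1^{s-1}$ for the $\limsup$, $(10)^k$ for the $\liminf$) and the resulting constants match the paper exactly.
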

\begin{proof}
This follows immediately from Corollary~\ref{ten},
together with the calculation of $\ice$ given in
the proof of Theorem~\ref{appr}.
\end{proof}

The {\it period-doubling word\/}
{\bf d} is defined to be the fixed
point of the morphism sending
$1 \rightarrow 10$ and $0\rightarrow 11$;
see \cite{Damanik:2000}.
\begin{theorem}
$P({\bf d}) = {1 \over {2 \ln 2}} \doteq 0.7213$  and
$p({\bf d}) = {1 \over {4 \ln 2}} \doteq 0.36067$.
\end{theorem}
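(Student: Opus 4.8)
The plan is to mirror the computation behind Theorem~\ref{fibo2}: since Corollary~\ref{bound2} expresses $M$ through $\nnp$ and $\ice$, it suffices to control, for the length-$n$ prefix $Y_n$ of $\mathbf d$, the growth of $\nnp(Y_n)$ and the behaviour of $\ice(Y_n)$. Write $h$ for the morphism $1\mapsto 10$, $0\mapsto 11$. I would first record the arithmetic description $d_n = 1-(\nu_2(n)\bmod 2)$, where $\nu_2$ is the $2$-adic valuation. Since $\nu_2(2m)=\nu_2(m)+1$ we get $d_m\neq d_{2m}$ for every $m$, so no prefix $Y_{2m}$ is a square; hence $\mathbf d$ has no square prefix and $\per(Y_n)>n/2$ for all $n$. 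Combined with the fact that $Y_{2^k-1}$ (the common prefix of $h^k(1)=h^{k-1}(1)\,h^{k-1}(0)$ and $h^k(0)$, which differ only in their last letter) has period $2^{k-1}$, this already forces $\ice(\mathbf d)=2$, with $\ice(Y_n)\to 2$ and $\ice(Y_{2^k-1})=(2^k-1)/2^{k-1}$.

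The second ingredient is a recursion for the number of periods. Because $\per(Y_n)>n/2$, the longest proper border of $Y_n$ has length $<n/2$ and does not overlap itself, so by the standard fact that the borders of a word form a chain (each one a border of the previous), we obtain $\PER(Y_n)=\{\per(Y_n)\}\cup\bigl(\per(Y_n)+\PER(Y_{n-\per(Y_n)})\bigr)$, exactly as in Lemma~\ref{AsubPER}. Consequently $\nnp(Y_n)=1+\nnp(Y_{n-\per(Y_n)})$, so $\nnp(Y_n)+1$ is the length of the greedy chain $n\mapsto n-\per(Y_n)$ run down to $0$. Everything then reduces to controlling the least period.

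The key lemma, and the main obstacle, is the explicit determination of $\per(Y_n)$: I expect $\per(Y_n)\in\{2^k,\,3\cdot 2^k\}$, with $\per(Y_n)=2^k$ for $\tfrac54 2^k\le n<2^{k+1}$ and $\per(Y_n)=3\cdot 2^k$ for $2^{k+2}\le n<5\cdot 2^k$. Writing $\ell=\lfloor\log_2 n\rfloor$, this is equivalent to $2^{\ell-2}\le n-\per(Y_n)<2^{\ell}$, i.e.\ each chain step decreases $\lfloor\log_2\cdot\rfloor$ by exactly $1$ or $2$. The lower bound $n-\per(Y_n)<2^{\ell}$ is immediate from $\per(Y_n)>n/2$; the upper bound $\per(Y_n)\le n-2^{\ell-2}$ (equivalently, $Y_n$ always has a border of length $\ge n/4$) is the substantive half. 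This is the analogue of Theorem~\ref{ostt} for $\mathbf d$, and I would prove it by induction on $k$ from the factorization $h^k(1)=h^{k-1}(1)\,h^{k-1}(0)$ together with the Fine--Wilf theorem, as in Section~\ref{three}; alternatively, since $\mathbf d$ is $2$-automatic, the statement is first-order expressible in the base-$2$ representation of $n$ and can be verified mechanically.

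Granting the lemma, the theorem follows cleanly, with both constants coming from the ``$1$ or $2$'' step size. For $P(\mathbf d)$: every step drops $\lfloor\log_2\cdot\rfloor$ by at least $1$, so the chain has length at most $\log_2 n$ and $\nnp(Y_n)\le\log_2 n$; since $\ice(Y_n)\to 2$ this gives $\limsup M(Y_n)\le \tfrac{1}{2\ln 2}$, and the family $n=2^k-1$, where every step is a ``$-1$'' step so that $\nnp(Y_{2^k-1})=k-1$, attains it. For $p(\mathbf d)$: every step drops $\lfloor\log_2\cdot\rfloor$ by at most $2$, so the chain has length at least $\tfrac12\log_2 n-O(1)$ and $\nnp(Y_n)\ge \tfrac12\log_2 n-O(1)$; since $\ice(Y_n)<2$ this gives $\liminf M(Y_n)\ge \tfrac{1}{4\ln 2}$, and the family $n=2^k$ (a fixed point of the chain in normalized coordinates, every step a ``$-2$'' step, with $\nnp(Y_{2^k})=\lfloor k/2\rfloor$) attains it. Note that the isolated prefixes whose period count dips slightly below $\tfrac12\log_2 n$ (for instance $n=84$) do not affect the $\liminf$: the ``drop by at most $2$'' bound is an honest inequality holding for every $n$, so the $O(1)$ deficit washes out in the ratio and the $\liminf$ is genuinely $\tfrac{1}{4\ln 2}$.
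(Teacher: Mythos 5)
Your proposal is correct and rests on the same crux as the paper's proof: an explicit formula for $\per(Y_n)$ for every $n$ (your two-case formula $\per(Y_n)=2^k$ on $[\tfrac54 2^k,2^{k+1})$ and $3\cdot 2^k$ on $[2^{k+2},5\cdot 2^k)$ is exactly the paper's Walnut-computed statement, after reindexing), and in both treatments this lemma is ultimately discharged by mechanical verification from $2$-automaticity rather than by a written-out induction. Where you genuinely diverge is in how the period count is extracted from that lemma: the paper builds a linear representation for $r(n)=|\PER(Y_n)|$ directly from the period automaton, derives the recurrences $r(2n+1)=r(n)+1$, $r(4n)=r(n)+1$, $r(4n+2)=r(n)+1$, and reads off the extremal $n$ from the factorization of $(n)_2$ into $1$, $00$, $10$; you instead use the border-chain recursion $\nnp(Y_n)=1+\nnp(Y_{n-\per(Y_n)})$ and observe that each chain step drops $\lfloor\log_2 n\rfloor$ by exactly $1$ or $2$, which gives the two-sided bound $\tfrac12\log_2 n - O(1)\le\nnp(Y_n)\le\log_2 n$ and identifies the extremal families $2^k-1$ (all drops of size $1$) and $4^j$ (all drops of size $2$; the paper uses $(10)^n$ in base $2$, which gives the same limit). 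Your route is more self-contained and makes the constants $\tfrac{1}{2\ln 2}$ and $\tfrac{1}{4\ln 2}$ conceptually transparent as the two step sizes, at the cost of needing the elementary chain lemma; the paper's route is more automatic and yields the pleasant combinatorial interpretation of $r(n)$ as a factorization length. The supporting observations you make ($d_n=1-(\nu_2(n)\bmod 2)$ forcing $\per(Y_n)>n/2$, and $Y_{2^k-1}$ having period $2^{k-1}$, so $\ice(Y_n)\to 2$) are correct and match the paper's $\ice$ value $2-2^{1-t}$. The only caveat is that your key lemma is stated as an expectation with a sketched proof, but since you explicitly note it can be verified mechanically, this is no less rigorous than the paper's own treatment.
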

\begin{proof}
Since $\bf d$ is not a Sturmian word, or even closely related to one, we need to use different techniques from those we used previously.

Let $r(n)$ denote the number of periods (including the trivial period) in the
length-$n$ prefix of $\bf d$.   We use $(n)_2$ to denote the
canonical base-$2$ representation of $n$, and $(n,p)_2$ to denote the base-$2$ representation of $n$ and $p$ as a sequence of pairs of bits (where the shorter representation is padded with leading zeros, if necessary).

We can use the theorem-proving software {\tt Walnut} to calculate the periods of prefixes of $\bf d$.   (For more about {\tt Walnut}, see \cite{Mousavi:2016}.)  We sketch the ideas briefly.   

We can write a first-order logical formula $\ \pdp(m,p) \ $ stating that
the prefix of length $m\geq 1$ of $\bf d$ has period $p$,
$1 \leq p \leq m$:
\begin{align*}
 \pdp(m,p) & := (1 \leq p \leq m) \ \wedge\ {\bf d}[0..m-p-1] = 
{\bf d}[p..m-1] \\
&= (1 \leq p \leq m) \ \wedge\ \forall t \ (0 \leq t<m-p) \implies {\bf d}[t] = {\bf d}[t+p] .
\end{align*}
Such a formula can be automatically
translated, using {\tt Walnut}, to an
automaton that recognizes the language 
$$\{ (n,p)_2 \suchthat \text{ the length-$n$ prefix of $\bf d$ has period $p$} \}.$$
We depict it below.
% def pdp "(p>=1) & (m>=1) & (p<=m) & (At (t+p<m) => PD[t] = PD[t+p])":
\begin{center}
\includegraphics[width=6.5in]{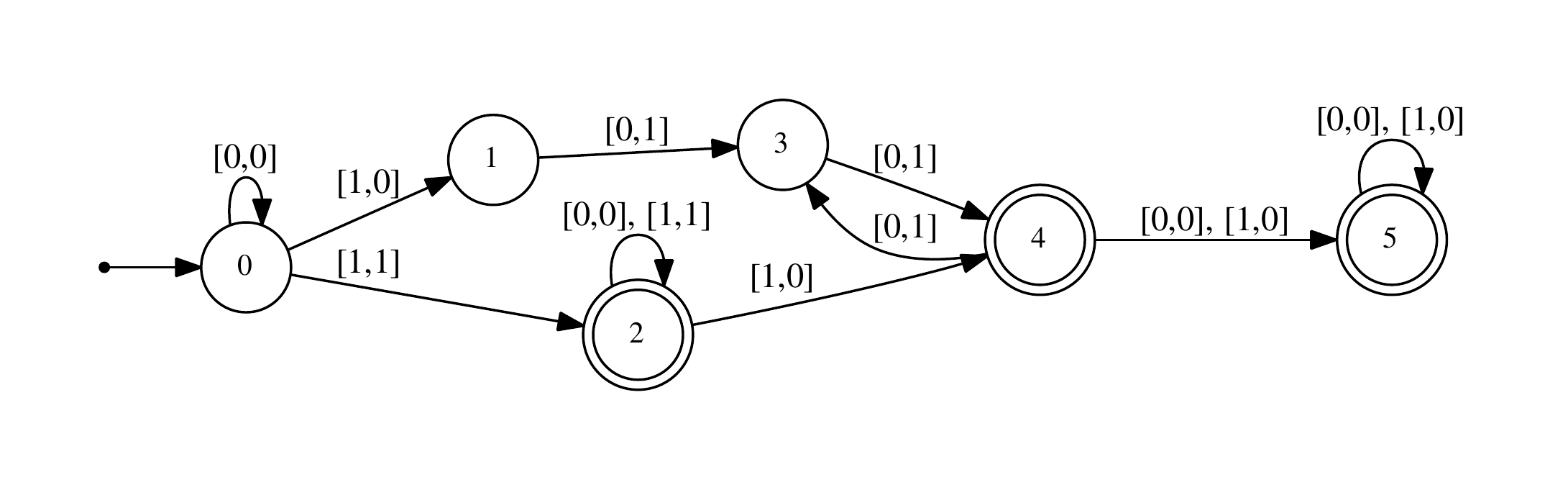}
\end{center}
Such an automaton can be automatically converted by {\tt Walnut} to a linear
representation for $r(n)$, as discussed in
\cite{Charlier&Rampersad&Shallit:2012}.  This
is a triple $(v, \rho, w)$ where $v, w$ are
vectors, and $\rho$ is a matrix-valued morphism,
such that $r(n) = v \cdot \rho ( (n)_2 ) \cdot w$.
The values are given below:
$$ v = [1\, 0\, 0\, 0\, 0\, 0] \quad
\rho(0) = \left[ \begin{array}{cccccc}1&0&0&0&0&0\\
0&0&0&1&0&0\\
0&0&1&0&0&0\\
0&0&0&0&1&0\\
0&0&0&1&0&1\\
0&0&0&0&0&1
\end{array} \right] \quad
\rho(1) = \left[ \begin{array}{cccccc} 0&1&1&0&0&0\\
0&0&0&0&0&0\\
0&0&1&0&1&0\\
0&0&0&0&0&0\\
0&0&0&0&0&1\\
0&0&0&0&0&1
\end{array} \right]
\quad
w = \left[ \begin{array}{c}
0\\
0\\
1\\
0\\
1\\
1
\end{array} \right]
. $$
From this, using the technique described in
\cite{Goc&Mousavi&Shallit:2013},
we can easily compute the relations
\begin{align*}
r(0) &= 0 \\
r(2n+1) &= r(n) + 1, \quad n \geq 0 \\
r(4n) &= r(n) + 1, \quad n \geq 1 \\
r(4n+2) &= r(n) + 1, \quad n \geq 0.
\end{align*}
Reinterpreting this definition for $r$, we see that
$r(n)$ is equal to the length of the (unique) factorization
of $(n)_2$ into the factors $1$, $00$, and $10$.
It now follows that 
\begin{enumerate}[(a)]
\item The smallest $m$ such that $r(m) = n$ is $m = 2^n - 1$;
\item The largest $m$ such that $r(m) = n$ is $m = \lfloor 2^{2n+1}/3 \rfloor$, with $(m)_2 = (10)^n$.
\end{enumerate}

Similarly, we can use {\tt Walnut} to determine the smallest period $p$ of every
length-$n$ prefix of $\bf d$.   We use the predicate
% def pdlp "$pdp(n,p) & (Aq (q>=1&q<p) => ~$pdp(n,q))":
$$ \pdlp(n,p) := \pdp(n,p) \ \wedge \ 
\forall q\ (1 \leq q < p) \implies \pdp(n,q).$$
This gives the automaton
\begin{center}
\includegraphics[width=5.5in]{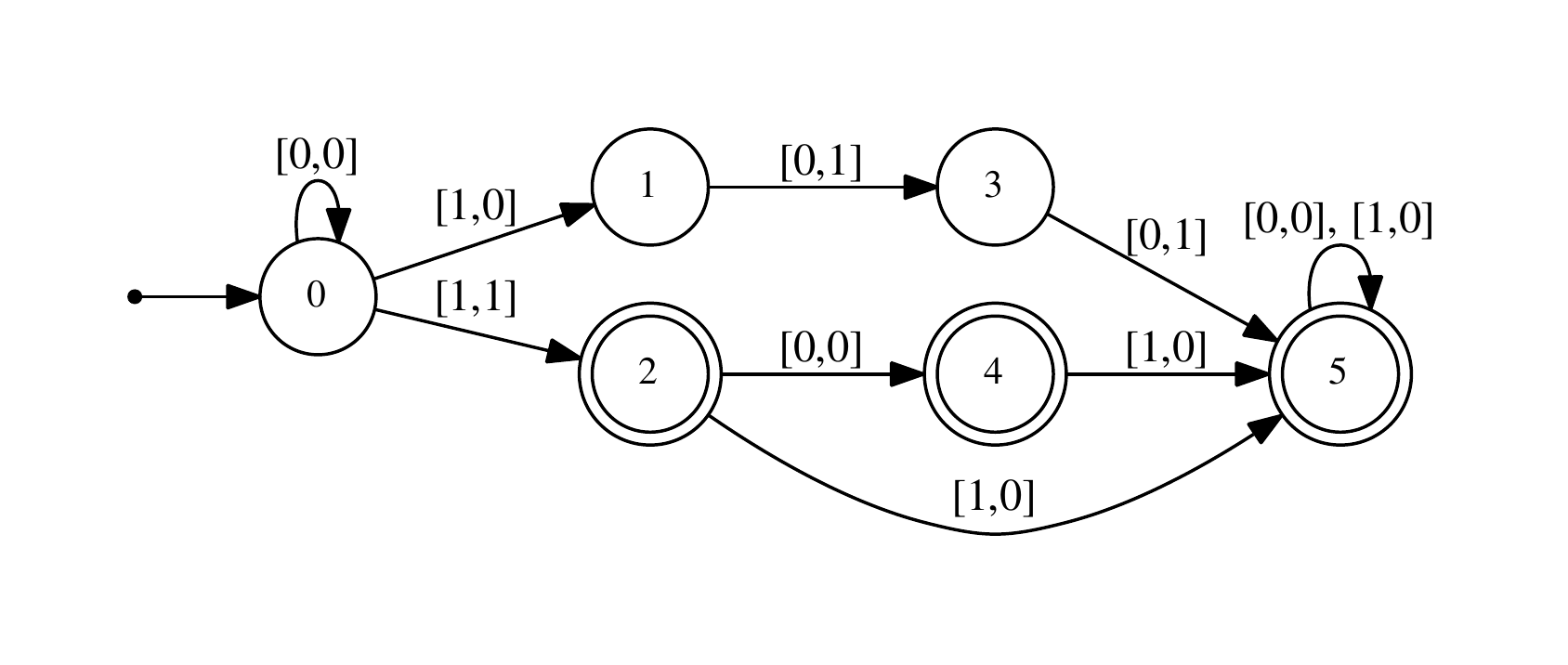}
\end{center}
Inspection of this automaton
shows that least period of the prefix of length
$n$ is, for $s \geq 2$, equal to
$3 \cdot 2^{s-2}$ for $2^s \leq n < 5 \cdot 2^{s-2}$ and $2^s$ for
$5 \cdot 2^{s-2} \leq n < 2^{s+1}$.   It follows that 
the 
initial critical exponent of every prefix of $\bf d$
of length $n$, for $2^t - 1 \leq n \leq 2^{t+1} - 2$,
is $2- 2^{1-t}$.   

The result now follows.
\end{proof}

\begin{theorem}
Let ${\bf t} = t_0 t_1 t_2 \cdots = 01101001\cdots$ be the Thue-Morse word, the fixed
point of the morphism $\mu$ described above.
Then
$P({\bf t}) = 3/(10 \ln 2) \doteq 0.4328$ 
and $p({\bf t}) = 0$.
\end{theorem}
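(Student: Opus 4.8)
The plan is to follow the template of the preceding proof for the period-doubling word $\bf d$, replacing its automaton by the Thue-Morse automaton throughout. First I would use Walnut to build the analog of the period predicate,
$$ \pdp(m,p) := (1 \leq p \leq m) \ \wedge\ \forall t\ (0 \leq t < m-p) \implies {\bf t}[t] = {\bf t}[t+p], $$
now evaluated on $\bf t$. Since $\bf t$ is $2$-automatic, Walnut converts this into an automaton accepting $\{(n,p)_2 : Y_n \text{ has period } p\}$, which the method of \cite{Charlier&Rampersad&Shallit:2012} turns into a linear representation $(v,\rho,w)$ for the function $r(n)$ counting all periods (including the trivial one) of the length-$n$ prefix $Y_n$. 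Passing this linear representation through the technique of \cite{Goc&Mousavi&Shallit:2013}, I would extract a finite system of recurrences for $r(2n+i)$ in terms of $r(n)$, reducing the computation of $\nnp(Y_n) = r(n) - 1$ to a reading of $(n)_2$, exactly as the factorization-into-$\{1,00,10\}$ description did for $\bf d$.

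The second ingredient is $\ice(Y_n)$, obtained from the least-period predicate $\pdlp$ of the previous proof applied to $\bf t$. Inspection of the resulting Walnut automaton shows that the extremal power prefixes of $\bf t$ occur at lengths $5\cdot 2^{k}$, where $Y_{5\cdot 2^k}$ has least period $3\cdot 2^{k}$ and hence exponent $5/3$; since $\bf t$ is overlap-free no prefix exceeds exponent $2$, and in fact Walnut verifies that none exceeds $5/3$. As $\ice(Y_n) = \max_{m\le n}\exp(Y_m)$ is nondecreasing in $n$, it follows that $\ice(Y_n) = 5/3$ for all $n\ge 5$. Consequently $M(Y_n) = \dfrac{3\,\nnp(Y_n)}{5\ln n}$ for $n \geq 5$, so both $P({\bf t})$ and $p({\bf t})$ are governed entirely by the growth of $r(n)$ against $\log_2 n$.

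For the $\limsup$ I would use the recurrences to show that the maximal value of $r(n)$ over length-$n$ prefixes grows like $\tfrac12\log_2 n$; combined with $\ice(Y_n)\equiv 5/3$ and $\ln n = \ln 2\cdot\log_2 n$ this yields
$$ P({\bf t}) = \frac{3}{5\ln 2}\cdot\frac12 = \frac{3}{10\ln 2}. $$
For the $\liminf$, the recurrences show that $r(n) = 2$ (equivalently $\nnp(Y_n) = 1$, a single border) for infinitely many $n$; along such a subsequence $\nnp(Y_n)$ stays bounded while $\ln n \to \infty$, so $M(Y_n) \to 0$ and hence $p({\bf t}) = 0$.

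The step I expect to be the main obstacle is the extremal analysis of the recurrences for $r(n)$: pinning down the exact leading constant $\tfrac12$ in the maximal growth of $r(n)$, and isolating the infinite family of prefixes achieving the minimum $r(n) = 2$. The automaton and linear-representation constructions are entirely routine in Walnut, and the $\ice$ computation is easy once the period automaton is in hand; the real work is the combinatorial optimization of the matrix recurrences, which, unlike the clean $\{1,00,10\}$ factorization available for $\bf d$, need not admit such a transparent closed form for $\bf t$.
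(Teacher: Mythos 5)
Your proposal follows essentially the same route as the paper: Walnut-generated period and least-period automata, the constant $\ice = 5/3$ for prefixes of length at least $5$, a linear representation and recurrences for the period-counting function, an extremal analysis showing the shortest prefix with $n$ nontrivial periods has length about $2^{2n-1}$ (giving the constant $\tfrac12\log_2 n$ and hence $3/(10\ln 2)$), and an infinite family of prefixes with a single nontrivial period for the $\liminf$ (the paper exhibits lengths $3\cdot 2^n+1$). The one step you flag as the main obstacle, the induction pinning down the extremal growth from the recurrences, is exactly the step the paper also describes as a tedious induction and omits, so your outline matches the paper's proof in both structure and level of detail.
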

\begin{proof}
We have $\ice(x) = 5/3$ for every prefix
$x$ of $\bf t$ of length $\geq 5$, a claim that can
easily be verified with {\tt Walnut}.

For the value of $p({\bf t})$, it suffices to 
observe
that $\nnp(x) = 1$ if $x$ is a prefix
of $\bf t$ of length $3\cdot 2^n + 1$
for $n \geq 0$, which can also be verified with 
{\tt Walnut}.

For $P({\bf t})$ it suffices to show that the shortest prefix of $\bf t$
having $n$ nontrivial periods is of 
length $2^{2n-1} + 2$.    For this we can use {\tt Walnut}, but the analysis is somewhat complicated.   Letting
$v(n)$ denote the number of nontrivial periods of the
length-$n$ prefix of $\bf t$, we can mimic what we did for the period-doubling word, obtaining the matrices and the following relations for $n \geq 0$:
\begin{align*}
v(4n) &= v(n) + [n \not= 0] \\
v(4n+3) &= v(4n+1) \\
v(8n+1) &= v(2n+1) + t_n \\
v(8n+2) &= v(2n+1) + t_n\\
v(8n+6) &= v(4n+1) + 1-t_n \\
v(16n+5) &= v(2n+1) + 1 \\
v(16n+13) &= v(4n+1) + 1 .
\end{align*}
Here $[n \not= 0]$ is the Iverson bracket, which evaluates
to $1$ if the condition holds and $0$ otherwise.

Now a tedious induction on $m$, which we omit, shows that
\begin{align*}
m \text{ is even and } v(m) \geq n & \implies m \geq 2^{2n-3} + 2; \\
m \text{ is odd and } v(m) \geq n & \implies m \geq 2^{2n-2} + 1, 
\end{align*}
and furthermore $v(2^{2n-3} + 2) = n$ for $n \geq 2$.
It follows that the shortest prefix of $\bf t$ 
having $n$ nontrivial periods is of length
$2^{2n-1} + 2$ for $n \geq 2$, from which the desired result follows.
\end{proof}

\begin{remark}
The {\tt Walnut} commands for the last two results are available on the third author's web page, at

\centerline{\url{https://cs.uwaterloo.ca/~shallit/papers.html} \ .}

\noindent {\tt Walnut} itself is available at

\centerline{  \url{https://github.com/hamousavi/Walnut} \ .}
\end{remark}

\begin{remark}
It would be interesting to compute
the values of
\begin{align*}
D_1 := \inf_{n \geq 1} \ \sup_{x \in \{ 0, 1 \}^n}\ M(x) \\
    D_2 := \liminf_{n \rightarrow \infty} \ \sup_{x \in \{ 0, 1 \}^n}\ M(x).
\end{align*}
Theorem~\ref{fibo2} shows that $D_2 \geq
1/(2 \tau^2 \ln \tau) \doteq 0.396879286$.
Thus, for example, for every sufficiently large $n$ there is a length-$n$ binary string $x$ with $M(x) \geq .396$.
\end{remark}

\section{Shortest overlap-free binary word with $p$ periods}

In this section and the following one, we consider how quickly
the number of periods can grow if we enforce an upper bound
on the exponent of repetitions occurring in the word.
% critical exponent of a word.   Recall that the
% {\it critical exponent\/} $\ce(x)$ of a finite or infinite word $x$
% is defined to be
% $$ \ce(x) := \sup_{{p \text{ a nonempty}} \atop {\text{ factor of $x$}}} \exp(p).$$

Recall that an {\it overlap\/} is a word of the form $axaxa$,
where $a$ is a single letter and $x$ is a (possibly empty) word.
An example in English is the word {\tt alfalfa}.  We say a
word is {\it overlap-free} if no finite factor is an overlap.

Define $f(p)$ to be the length of the shortest binary overlap-free word
having $p$ nontrivial periods.     Recall that we call a border $w$
of $x$ {\it short\/} if $|w| < |x|/2$.

Define the morphism $\mu$ by $\mu(0) = 01$ and $\mu(1) = 10$.
If $w = axa$ for a single letter $a$ and (possibly empty) word
$x$, define $\gamma(w) = a^{-1} \mu^2 (w) a^{-1}$, or, in other words,
the word $\mu^2(w)$ with an $a$ removed from the front and back.

\begin{lemma}
Define a sequence of words $(A_n)_{n \geq 3}$ as follows:
$$
A_n = \begin{cases}
        001001100100, & \text{if $n = 3$}; \\
        \gamma(A_{n-1}), & \text{if $n \geq 4$}.
        \end{cases}
$$
%For example,
%\begin{align*}
%A_3 &= 001001100100 \\
%A_4 &= 1100110100101100110100110010110011010010110011 \\
%A_5 &= 
%\text{\tiny\rm 00110010110011010011001011010010110011010010110100110010110011010011001011010010110011010011001011001101001011010011001011001101001100101101001011001101001011010011001011001101001100 }
%\end{align*}
Then $A_n$ is a palindrome with $n$ short palindromic borders for $n \geq 3$.
\end{lemma}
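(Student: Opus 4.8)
The plan is to prove the three assertions—$A_n$ is a palindrome, it has $n$ short palindromic borders, and (implicitly) that these are all short—by induction on $n$, using the map $\gamma$ to transport structure from $A_{n-1}$ to $A_n$. First I would verify the base case $n=3$ directly: the word $A_3 = 001001100100$ has length $12$, and one checks by inspection that it is a palindrome and locates its three short palindromic borders (by symmetry, borders of a palindrome are automatically palindromes, so I only need to count short borders and confirm there are exactly three). This reduces the base case to a finite computation.

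For the inductive step, the key is to understand how $\gamma$ interacts with palindromes and borders. I would first record the elementary fact that $\mu^2$ (equivalently $\mu \circ \mu$, where $\mu(0)=01$, $\mu(1)=10$) is a morphism, and that $\mu^2(w)$ has length $4|w|$. The definition $\gamma(w) = a^{-1}\mu^2(w)a^{-1}$ (for $w = axa$) strips the leading and trailing letter; I expect that $\mu^2$ nearly commutes with reversal in the sense that $\mu^2(w^R)$ equals the reversal of $\mu^2(w)$ up to a predictable boundary correction, and that the stripping in $\gamma$ is exactly what is needed to make $\gamma(w)$ a genuine palindrome when $w$ is. So the first sub-step is: if $w$ is a palindrome beginning and ending with $a$, then $\gamma(w)$ is a palindrome. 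This should follow from a short direct calculation on the boundary letters of $\mu^2$.

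The heart of the argument is the border correspondence. Each palindromic border $u$ of $A_{n-1}$ should give rise, via $\gamma$ (or via $\mu^2$ with an appropriate trim), to a palindromic border of $A_n$, and conversely; this yields a bijection between the borders of $A_{n-1}$ and $n-1$ of the borders of $A_n$, producing $n-1$ borders. The remaining, $n$-th border must be produced afresh—most naturally, a new \emph{short} border created at the level of $A_n$ that has no preimage at level $A_{n-1}$ (for instance a very short border such as the length-$1$ or length-$2$ prefix, which one checks is a palindromic border of every $A_n$). I would make the border correspondence precise by showing that if $u$ is a border of $A_{n-1}$ then $\gamma$-type image of $u$ (suitably trimmed) is both a prefix and a suffix of $A_n$, using that $\mu^2$ sends prefixes to prefixes and suffixes to suffixes and that the boundary trims match up. The main obstacle I anticipate is precisely the bookkeeping at the word boundaries: ensuring the single-letter deletions in $\gamma$ land correctly so that images of borders remain borders, that the $n$-th new border really is new and not already counted, and—crucially—that all $n$ resulting borders are \emph{short}, i.e.\ of length $< |A_n|/2$. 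Shortness should be controlled by the length expansion $|A_n| = 4|A_{n-1}| - 2$: since a short border of $A_{n-1}$ has length $< |A_{n-1}|/2$, its image under the roughly length-quadrupling map $\gamma$ has length comfortably below $|A_n|/2$, and the newly added short border is short by construction. Assembling these pieces gives exactly $n$ short palindromic borders of $A_n$, completing the induction.
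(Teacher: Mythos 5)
Your proposal follows essentially the same route as the paper's proof: induction on $n$, the observation that $\gamma$ preserves palindromes, the transport of each short palindromic border $w_i$ of $A_{n-1}$ to the border $\gamma(w_i)$ of $A_n$ (using $A_{n-1}=w_iyw_i$ and the fact that $\mu^2$ sends prefixes to prefixes and suffixes to suffixes), and the addition of one new length-one border (the paper takes $\overline{a}$, read off from $\gamma(awa)=\overline{a}\,\overline{a}\,a\,\mu^2(w)\,a\,\overline{a}\,\overline{a}$). Your extra attention to shortness via $|A_n|=4|A_{n-1}|-2$ and to the new border being genuinely new just makes explicit what the paper dismisses as trivial.
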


\begin{proof}

Observe that if $w$ is a palindrome, then so is $\gamma(w)$.
Write $\overline{a} = 1-a$ for $a \in \{0,1\}$.

We now prove the claim by induction on $n$.
It is true for $n = 3$, since the borders are $0, 00, $ and $00100$.

Now assume the result is true for $n$; we prove it for $n+1$.  
Suppose $n$ short palindromic borders of $A_n$ are
$w_1, w_2, \ldots, w_n$, and each starts with the letter
$a$.  From the observation above, we know that
$A_{n+1} = \gamma(A_n)$ is a palindrome.   We claim that 
$\overline{a}, \gamma(w_1), \gamma(w_2), \ldots, \gamma(w_n)$ are short palindromic borders of
$\gamma(A_n)$.  

To see that $\overline{a}$ is a border of $A_{n+1}$, note that
$A_n = awa$ for some $w$, so $\gamma(A_n) = \overline{a} \overline{a} a \mu^2(w) 
a \overline{a}\overline{a}$.

Otherwise, let $w_i$ be a palindromic border of $A_n$.   Since it is
short, we have $A_n = w_i y w_i$ for some $y$.  Then $\gamma(w_i)$ is both a prefix
and suffix of $\gamma(A_n)$ and hence is a palindromic border of $A_{n+1}$.
The claim about the length of the borders is trivial.

Thus $A_{n+1}$ has at least $n+1$ palindromic short borders.
\end{proof}

\begin{corollary}
We have $f(1) = 2$, $f(2) = 5$, and
$f(p) \leq (17/6) 4^{p-2} + 2/3$ for $p \geq 3$.
\end{corollary}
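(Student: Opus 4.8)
The plan is to dispose of the two base values $f(1)=2$ and $f(2)=5$ by a finite inspection of short overlap-free binary words (for instance $00$ realizes $f(1)=2$, and $00100$ realizes $f(2)\le 5$, while no overlap-free word of length $\le 4$ has two borders), and then to prove the bound for $p\ge 3$ by showing that the word $A_p$ furnished by the preceding Lemma is itself overlap-free, has \emph{exactly} $p$ nontrivial periods, and has length $(17/6)4^{p-2}+2/3$. Since the number of nontrivial periods of a word equals its number of borders, this yields $f(p)\le |A_p|=(17/6)4^{p-2}+2/3$.

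For the length I would set $L_n=|A_n|$ and use that $\gamma(w)=a^{-1}\mu^2(w)a^{-1}$ satisfies $|\gamma(w)|=4|w|-2$, so $L_{n+1}=4L_n-2$ with $L_3=12$. Solving this linear recurrence (its fixed point is $2/3$) gives $L_n-\tfrac23=4^{n-3}(12-\tfrac23)=\tfrac{34}{3}4^{n-3}$, i.e.\ $L_n=(17/6)4^{n-2}+2/3$, the claimed value. Overlap-freeness of $A_n$ I would get by induction: $A_3$ is overlap-free by inspection, and since Thue's morphism $\mu$ preserves overlap-freeness, so does $\mu^2$; as $\gamma(w)$ is a factor of $\mu^2(w)$, the word $A_{n+1}=\gamma(A_n)$ is overlap-free whenever $A_n$ is.

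The substance is to show $A_p$ has exactly $p$ nontrivial periods. Two easy reductions come first. Because $A_p$ is overlap-free, a nontrivial period $q$ with $2q+1\le |A_p|$ would force the length-$(2q+1)$ prefix to be an overlap $axaxa$; hence every period satisfies $2q\ge|A_p|$, so every border has length at most $|A_p|/2$. Because $A_p$ is a palindrome (by the Lemma), the length-$k$ suffix is the reverse of the length-$k$ prefix, so a length-$k$ prefix is a border exactly when it is a palindrome; thus the borders of $A_p$ are precisely its proper palindromic prefixes, and the Lemma already exhibits $p$ of them. It remains to prove there are no others, which I would do via the following strengthening of the Lemma, by induction on $n$: the borders of $A_{n+1}=\gamma(A_n)$ are exactly $\overline{a}$ together with the words $\gamma(b)$ as $b$ ranges over the borders of $A_n$. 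The base case $n=3$ is checked by hand (the borders of $A_3=001001100100$ are exactly $0$, $00$, $00100$), and the inductive step makes the border count of $A_{n+1}$ equal to one more than that of $A_n$, so $A_p$ has exactly $p$ nontrivial periods.

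The hard part will be the ``no other borders'' direction of this strengthening: one must show that every border of $\gamma(A_n)$ of length $\ge 2$ arises as $\gamma(b)$ for a border $b$ of $A_n$, the single exception being the one-letter border $\overline{a}$. I would prove this using the synchronization (unique-factorization) structure of overlap-free words, namely the Restivo--Salemi decomposition $w=x\,\mu(z)\,y$ with $x,y\in\{\epsilon,0,1,00,11\}$ and $z$ overlap-free: a border $u$ of $\gamma(A_n)$, being simultaneously a prefix and a suffix of an overlap-free word, must align with the $\mu^2$-block boundaries of $\mu^2(A_n)=(a\overline{a}\,\overline{a}a)\mu^2(w)(a\overline{a}\,\overline{a}a)$ except within the two-letter boundary regions, which forces $|u|$ to have the form $4|b|-2$ and $u=\gamma(b)$ for a border $b$ of $A_n$. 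Equivalently, in terms of periods, one shows that the nontrivial periods of $\gamma(A_n)$ are precisely $|A_{n+1}|-1$ together with $\{4q : q \text{ a nontrivial period of } A_n\}$; the factor $4$ reflects the quadrupling of length under $\gamma$, and the delicate point is ruling out spurious periods that fail to respect the $\mu^2$-blocks, which is exactly the place where overlap-freeness (through the synchronization property) is indispensable. Granting this, the induction closes and the corollary follows.
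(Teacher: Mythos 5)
Your proposal follows the same skeleton as the paper's proof: the base values $f(1)=2$ and $f(2)=5$ are realized by $00$ and $00100$; overlap-freeness of $A_p$ is obtained by induction from Thue's theorem ($\mu$ preserves overlap-freeness, hence so does $\mu^2$, and $\gamma(A_{p-1})$ is a factor of $\mu^2(A_{p-1})$); and the length formula follows from the recurrence $|A_{n+1}|=4|A_n|-2$ with $|A_3|=12$, exactly as the paper's ``easy induction.'' Where you diverge is in insisting that $A_p$ have \emph{exactly} $p$ nontrivial periods. The paper simply writes ``$A_p$ has $p$ borders and hence $p$ nontrivial periods,'' citing the preceding lemma --- but that lemma only exhibits $p$ short palindromic borders, i.e., a lower bound, so you have put your finger on a point the paper glosses over (and which matters, since $f(p)$ counts words with exactly $p$ nontrivial periods). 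Your two reductions are correct and genuinely narrow the problem: overlap-freeness forces every nontrivial period $q$ to satisfy $2q\ge|A_p|$, so all borders are short, and palindromicity of $A_p$ makes its borders precisely its proper palindromic prefixes. However, the decisive step --- that $\gamma(A_n)$ has no borders other than $\overline{a}$ and the words $\gamma(b)$ --- is only announced: the synchronization argument via the Restivo--Salemi-type factorization is a plausible route, but you have not carried it out, so as written your proof of exactness is incomplete. In short, your proposal matches the paper on everything the paper actually proves, and is more candid about the one claim the paper leaves unjustified, but it does not yet close that gap either.
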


\begin{proof}
For $p = 1$, the shortest binary overlap-free word with $1$ nontrivial
period is $00$.   For $p = 2$ it is $00100$.  

Next we argue, by induction on $p$, that 
that each $A_p$, for $p \geq 3$, is overlap-free.   
The base case is $p = 3$, and is easy to check.
Otherwise assume the result is true for $A_p$.
We now use a classical result
that if a word $x$ is overlap-free, then so is $\mu(x)$ \cite{Thue:1912}.
Applying this twice, we see that $\mu^2(A_p)$ is overlap-free.
Then $A_{p+1} = \gamma(A_p)$ is overlap-free, since it is a factor of $\mu^2(A_p)$.

As we have seen above, $A_p$ has $p$ borders and hence $p$ nontrivial
periods.  The only thing left to verify is that
$|A_p| = (17/6) 4^{p-2} + 2/3$ for $p \geq 3$.   This is an
easy induction, and is left to the reader.
\end{proof}

\begin{remark}
One can go from $A_p$ to $A_{p+1}$, for $p \geq 3$,
via the following procedure, which we state without proof.
Write $A_p$ in terms of its
run-length encoding, that is,
$A_p = a^{e_1} b^{e_2} a^{e_3} b^{e_4} \cdots $,
where $a \not= b$ and all the $e_i$ are positive.
Then, considering $c^e$ as the pair $(c,e)$, apply the following morphism:
\begin{align*}
(0,1) & \rightarrow 1101 \\
(1,1) & \rightarrow 0010 \\
(0,2) & \rightarrow 11001101 \\
(1,1) & \rightarrow 00110010
\end{align*}
Finally, drop the last two symbols.
\end{remark}

\begin{remark}
We conjecture that the words $A_p$ constructed above are actually
the shortest overlap-free binary words with $p$ periods with $p \geq 3$, but we
do not currently have a proof of this claim in general.  The sequence
$(f(p))$ is sequence \seqnum{A334811} in the
{\it On-Line Encyclopedia of Integer Sequences} \cite{Sloane:2020}.
\end{remark}

\section{Shortest squarefree ternary word with $p$ periods}

Recall that a {\it square\/} is a nonempty word of the form $xx$,
such as the English word {\tt murmur}.  A word is {\it squarefree\/}
if no finite factor is a square.

Let $g(p)$ be the length of the shortest ternary squarefree word
having $p$ nontrivial periods.   Here are the first few values
of $g$, computed through exhaustive search.
\begin{center}
\begin{tabular}{c|ccccc}
$p$ & 0 & 1 & 2 & 3 & 4 \\
\hline
$g(p)$ & 1 & 3 & 7 & 23 & 59
\end{tabular}
\end{center}

\begin{theorem}
For $p \geq 3$ we have $g(p) \leq {{17}\over {12}} 4^{p-1} + 1/3$.
\end{theorem}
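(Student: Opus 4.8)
The plan is to mirror the construction of the previous section, replacing the binary overlap-free words $A_n$ by an analogous self-similar family of ternary squarefree words $(C_p)_{p \geq 3}$. Concretely, I would take $C_3$ to be an explicit squarefree ternary word of length $23$ having exactly three nontrivial periods (such a word exists, since $g(3) = 23$ was found by the exhaustive search recorded in the table), and then generate $C_p$ from $C_{p-1}$ by a squarefree-preserving substitution-and-trim rule: apply a squarefree morphism $h$ on $\{0,1,2\}$ whose images all have length $4$, and then delete a single, carefully chosen boundary symbol, so that $|C_p| = 4|C_{p-1}| - 1$. Since $C_p$ then witnesses a squarefree ternary word of that length possessing exactly $p$ nontrivial periods, we get $g(p) \leq |C_p|$. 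It is worth noting that this target length coincides with $2|A_p| - 1$, where $A_p$ is the overlap-free word of the previous section, which suggests the two constructions are closely related.

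Squarefreeness I would handle by a routine induction on $p$. The base word $C_3$ is squarefree by direct inspection. For the inductive step, because $h$ is a squarefree-preserving morphism, its squarefreeness being verifiable by a finite inspection (the Crochemore-style check that $h(w)$ is squarefree for all squarefree $w$ of bounded length), the image $h(C_{p-1})$ is squarefree whenever $C_{p-1}$ is; and $C_p$ is obtained from $h(C_{p-1})$ by deleting a boundary symbol, hence is a \emph{factor} of a squarefree word and is therefore itself squarefree. The length recurrence $|C_p| = 4|C_{p-1}| - 1$ follows immediately from the length-$4$ images together with the single deletion, and solving it with $|C_3| = 23$ gives $|C_p| = {17 \over 12} 4^{p-1} + {1 \over 3}$, which is exactly the claimed bound.

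The heart of the argument is the bookkeeping of periods, which I would carry out through borders, using the equivalence between a nontrivial period $t$ of a length-$n$ word and a border of length $n-t$. As in the overlap-free lemma above, I would show that the substitution (followed by the deletion) transports each border of $C_{p-1}$ to a border of $C_p$ while creating precisely one additional short border; restricting attention to palindromic borders should make this transport clean, since the image of a palindrome under a symmetric morphism is, after trimming, again a palindrome. This yields at least $p$ nontrivial periods. The delicate point, and the step I expect to be the main obstacle, is to show that \emph{no extra} periods are introduced, so that the count is exactly $p$: one must rule out additional borders arising from coincidental overlaps inside the morphic image, and verify that the deletion neither destroys an intended border nor manufactures a spurious one. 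The Fine–Wilf theorem is the natural tool for excluding unwanted coprime or near-coprime periods, as it was in the period inequality and in the Sturmian analysis. Exhibiting the correct morphism $h$, base word $C_3$, and deletion rule so that this exact-count property provably holds is the part of the proof requiring the most care.
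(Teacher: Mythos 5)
Your proposal is a genuinely different route from the paper's, but as it stands it has real gaps and its central mechanism cannot be realized. The paper does \emph{not} build a new self-similar ternary family by substitution. Instead it recycles the binary overlap-free words $A_p$ of the preceding section: writing $a$ for the first (and last) letter of $A_p$, it forms $B_p$ by recording the number of occurrences of $\overline{a}$ between consecutive occurrences of $a$. Squarefreeness of $B_p$ is then immediate, since a square $c_1\cdots c_t c_1 \cdots c_t$ in $B_p$ would lift to the overlap $ab^{c_1}a\cdots ab^{c_t}ab^{c_1}a \cdots ab^{c_t}a$ in $A_p$; and every border of $A_p$ other than the length-$1$ border pushes forward to a border of $B_p$, giving $p-1$ nontrivial periods in a word of length $|A_p|/2 = (17/12)4^{p-2}+1/3$. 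Reindexing yields the stated bound. All the hard combinatorics (palindromic borders, the squarefree-preservation, the recursion) is thus inherited from the binary construction rather than re-proved on the ternary side.

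Concretely, two things go wrong with your plan. First, you never exhibit the morphism $h$, the base word $C_3$, or the deletion rule, and you explicitly concede that the decisive step --- controlling the borders of $C_p$ through the substitution --- is unresolved; so no bound is actually established. Second, and more fatally, the mechanism you specify cannot exist: there is no squarefreeness-preserving uniform morphism of length $4$ on the three-letter alphabet. A short case analysis shows this: for $h(xy)$ ($x \neq y$) to avoid squares of length $2$, the last letter of each image must differ from the first letter of every other image, which forces each image either to begin and end with the same letter (hence to be of the form $s\,u\,v\,s$ with $\{u,v\}$ the complementary pair) or to fall into a handful of other shapes, and in every case some concatenation $h(xy)$ already contains a square (e.g., $0120\cdot 1201 \supseteq (120)^2$, $0120\cdot 2012 \supseteq (20)^2$, $1021\cdot 2102 \supseteq (21)^2$). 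This is precisely why the paper's closing remark, which does give a $4$-uniform substitution rule for passing from $B_p$ to $B_{p+1}$, first recodes the word over a \emph{four}-letter alphabet by replacing every other $1$ with a $3$ --- and even then the rule is stated without proof. If you want to salvage your approach, the natural fix is exactly the paper's: derive the ternary word from $A_p$ by the gap-coding, so that both squarefreeness and the border count come for free from the overlap-free case.
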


\begin{proof}
Consider the words $A_p$ defined above.   Suppose $A_p$ starts and
ends with the letter $a$.   Let $B_p$ be the word whose
$i$'th letter is the number of occurrences of $\overline{a}$
between the $i$'th and the $(i+1)$'th occurrence of $a$.
For example, we have
\begin{align*}
B_3 &= 0102010 \\
B_4 &= 02012102012021020121020 \\
B_5 &=  \text{\tiny\rm 0201202102012101202101210201202102012101202102012021012102012021020121012021012102012021020 }
\end{align*}
Then each $B_p$ is squarefree.  For if $B_p$ had a square,
say $c_1 c_2 \cdots c_t c_1 c_2 \cdots c_t$, then
$A_p$ has the overlap 
$$a b^{c_1} a b^{c_2} \cdots a b^{c_t} a b^{c_1}
a b^{c_2} \cdots a b^{c_t} a,$$
where $b = \overline{a}$, a contradiction.

Furthermore, each border of $A_p$, except the border of length
$1$, corresponds via this map to a border of $B_p$.  So
$\nnp(B_p) = p-1$.   By induction we can show
$|A_p| = |B_p|/2 = (17/12) 4^{p-2} + 1/3$ for $p \geq 4$.
It follows that $g(p) \leq (17/12) 4^{p-1} + 1/3$.
\end{proof}

\begin{remark}
Our bound is clearly not optimal.   It would be interesting to
obtain better bounds for $g(p)$.  The sequence
$(g(p))$ is sequence \seqnum{A332866} in the
{\it On-Line Encyclopedia of Integer Sequences} \cite{Sloane:2020}.
\end{remark}

\begin{remark}
One can go from $B_p$ to $B_{p+1}$, for $p \geq 4$, using the following
procedure, which we state without proof.
Take $B_p$ and replace every other $1$ in it with $3$.
Then apply the following morphism:
\begin{align*}
0 &\rightarrow 0201 \\
1 &\rightarrow 2101 \\
2 &\rightarrow 2021 \\
3 &\rightarrow 0121 .
\end{align*}
Finally, drop the last letter.
\end{remark}

\section*{Acknowledgments}

Thanks to Anna Frid, Jean-Paul Allouche,
Luke Schaeffer, Kalle Saari, \v{S}t\v{e}p\'an Holub, Jean Berstel, and Val{\'e}rie Berth\'e for their helpful comments.

\end{document}